\documentclass[journal, twocolumn]{IEEEtran}

\ifCLASSINFOpdf
   \usepackage[pdftex]{graphicx}
   \graphicspath{{../pdf/}{../jpeg/}}
   \DeclareGraphicsExtensions{.pdf,.jpeg,.png}
\else
   \usepackage[dvips]{graphicx}
   \graphicspath{{../eps/}}
\fi

\ifCLASSOPTIONcompsoc
  \usepackage[caption=false,font=normalsize,labelfont=sf,textfont=sf]{subfig}
\else
  \usepackage[caption=false,font=footnotesize]{subfig}
\fi

\usepackage[cmex10]{amsmath}
\usepackage{bm}
\usepackage{amssymb}
\usepackage{amsthm}

\usepackage[ruled,linesnumbered]{algorithm2e}
\usepackage{algpseudocode}
\usepackage{array}
\usepackage{cite}
\usepackage{url}
\usepackage{booktabs}
\usepackage{ragged2e}
\usepackage{color}
\usepackage{eso-pic}

\renewcommand{\raggedright}{\leftskip=0pt \rightskip=0pt plus 0cm}




\renewcommand{\raggedright}{\leftskip=0pt \rightskip=0pt plus 0cm}

\newcommand{\etal}{\emph{et~al.~}}

\newtheorem{lemm}{Lemma}[section]
\newtheorem{theo}{Theorem}[section]

\makeatletter

\newcommand{\Rmnum}[1]{\expandafter\@slowromancap\romannumeral #1@}
\makeatother

\hyphenation{op-tical net-works semi-conduc-tor}

\begin{document}

\title{Joint Transportation and Charging Scheduling in Public Vehicle Systems - A Game Theoretic Approach}

\author{Ming~Zhu,
        ~Xiao-Yang~Liu,
        ~and~Xiaodong~Wang,~\IEEEmembership{Fellow,~IEEE}
\thanks{M.~Zhu is with the Shenzhen Institutes of Advanced Technology, Chinese Academy of Sciences, Shenzhen, China, and the Department of Computer Science and Engineering, Shanghai Jiao Tong University, Shanghai, China, E-mail: zhumingpassional@gmail.com, zhumingpassional@sjtu.edu.cn.}
\thanks{X.-Y.~Liu is with the Electrical Engineering Department, Columbia University, New York City, US, and the Department of Computer Science and Engineering, Shanghai Jiao Tong University, Shanghai, China. E-mail: xiaoyang@ee.columbia.edu.}
\thanks{X.~Wang is with the Department of Electrical Engineering, Columbia University, New York, NY, 10027, USA (E-mail: wangx@ee.columbia.edu).}
}
%

\maketitle
%
%
%
%

\begin{abstract}
  Public vehicle (PV) systems are promising transportation systems for future smart cities which provide dynamic ride-sharing services according to passengers' requests. PVs are driverless/self-driving electric vehicles which require frequent recharging from smart grids. For such systems, the challenge lies in both the efficient scheduling scheme to satisfy transportation demands with service guarantee and the cost-effective charging strategy under the real-time electricity pricing. In this paper, we study the joint transportation and charging scheduling for PV systems to balance the transportation and charging demands, ensuring the long-term operation. We adopt a cake cutting game model to capture the interactions among PV groups, the cloud and smart grids. The cloud announces strategies to coordinate the allocation of transportation and energy resources among PV groups. All the PV groups try to maximize their joint transportation and charging utilities. We propose an algorithm to obtain the unique normalized Nash equilibrium point for this problem. Simulations are performed to confirm the effects of our scheme under the real taxi and power grid data sets of New York City. Our results show that our scheme achieves almost the same transportation performance compared with a heuristic scheme, namely, transportation with greedy charging; however, the average energy price of the proposed scheme is 10.86\% lower than the latter one.
\end{abstract}

\begin{IEEEkeywords}
Public vehicle systems, transportation, charging, smart grids, real-time electricity pricing, cake cutting game.
\end{IEEEkeywords}

\IEEEpeerreviewmaketitle

\section{Introduction}\label{Sec:Introduction}


The public vehicle (PV) systems \cite{zhu2016PublicVehicle} \cite{zhu2016transfer} \cite{Zhu2016TrafficBigData}, also known as the shared internet of vehicle systems or intelligent transportation systems, provide high-quality ride-sharing services in future smart cities. The vehicles in PV systems, called PVs, are typically driverless/self-driving \cite{tao2017driverless-car} electric vehicles with large capacities just like buses. PVs are connected to smart grids for self-charging. A PV system consists of three main components: a cloud, passengers/users, and PVs.
\begin{figure}
  \centering
  \includegraphics[height=0.60\linewidth,width=0.98\linewidth]{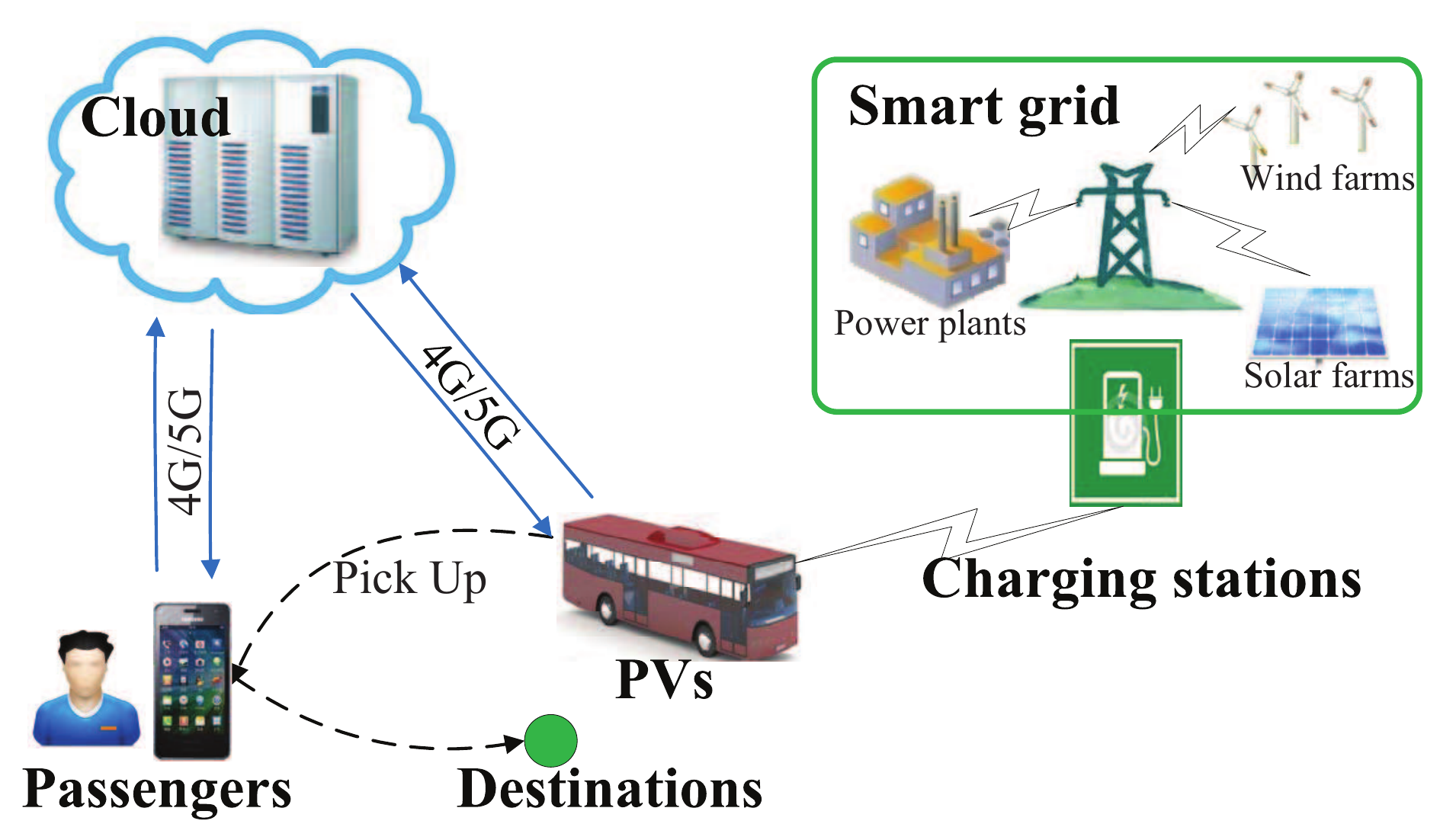}
  \caption{The V2G scenario: a PV system and the smart grid.}
  \label{Fig:Scenario}
\end{figure}
The operation flow of a PV system is as follows. If a passenger/user needs a trip service, he/she sends a request to the cloud via a smart phone, including an earliest start time, a pickup position (origin) and a dropoff position (destination), etc. Then the cloud computes the ride matches between PVs and passengers, and calculates paths for PVs, and finally schedules a suitable PV to drive him/her from the origin to the destination, wherein the paths may be shared with others.

PVs are typically electric vehicles (EVs), and are connected to smart grids for battery charging. Smart grids are envisioned as the next-generation power grid systems that can intelligently accommodate requests by all connected users. They are equipped with a smart metering infrastructure capable of sensing and measuring power consumptions from consumers with the integration of advanced computing, control, information and communication technologies \cite{yan2013survey-smart-grid}. The smart grids will have more efficient, more economical and more reliable power generations, distributions, and consumptions than the conventional power grids. PVs and smart grids constitute complicated vehicle-to-grid (V2G) ecosystems \cite{tan2016gameV2G}. Fig.~\ref{Fig:Scenario} shows an overview of the target scenario including a PV system and a smart grid.

To the best of our knowledge, this is the first work to consider the joint transportation and charging scheduling problem for PV systems. Existing works focus only on either the transportation or the charging problem for PV systems or PV-like systems. For example, \cite{santi2014VehiclePooling} proposes path planning strategies without considering energy storage or charging actions; whereas \cite{chen2016Traffic-Demand-Forecast} proposes the charging strategies from smart grids neglecting the transportation actions. In addition, the charging scenario in PV systems is different from that in existing works. For example, private EVs are parked 95\% of the time \cite{morency2015-CarParkingTime-95-percent} and the batteries are usually charged at parking lots at homes or working places most of the day. PVs have much less parking time than private EVs since they are shared by all passengers, and they need frequent charging since they serve more passengers with high occupancy rates \cite{zhu2016PublicVehicle}. PVs' routes and schedules vary with times and locations for the constantly changing transportation demands. Therefore, how to take advantage of the limited amount of idle time to charge the batteries becomes an important problem. We consider the joint transportation and charging scheduling problem (JTCSP) for PV systems. The goal is to balance the transportation and charging demands to guarantee the long-term operation of PV systems with less charging costs and more profits.

The challenges of JTCSP are as follows: 1) The cloud should consider the transportation demands from passengers which vary with times and locations. 2) The charging demands of PV groups should be satisfied with less costs under the scenario of real-time electricity pricing. 3) The cloud should ensure PVs to have sufficient energy to serve passengers at any time even if the energy price is high, otherwise, some PVs may not have sufficient energy to serve passengers. Too many or too few PVs engaged in transportation or charging will affect the profits and energy storages, e.g., some PVs engaged in transportation may have insufficient energy, and some other PVs engaged in charging may not provide transportation services to earn profits.

The main contributions of this paper include the following: 1) We adopt the cake cutting game \cite{chen2010cake-cutting} to capture the transportation and charging patterns of PVs, and then use a novel utility model to reflect the transportation benefits, satisfactions for charged batteries, and charging costs. Suppose each player has a different opinion as to which part of a cake is more valuable. The aim of a cake cutting game \cite{chen2010cake-cutting} is to divide a cake among multiple players so that everyone values his or her piece no less than any other piece. In this paper, we make use of the cake cutting game in selecting different transportation and charging vehicles for different PV groups. 2) We analyze the existence and uniqueness of the Nash equilibrium in this cake cutting game, and analyze the features of the game in the JTCSP compared with existing works, e.g., those that only consider transportation or charging. 3) We propose an efficient solution to JTCSP to balance the transportation and charging demands of PVs which achieves a unique normalized Nash equilibrium point in the cake cutting game. 4) Simulations are performed to demonstrate the effectiveness of the proposed solution under the real transportation and power grid data sets of New York City, and then compare the performance of our scheme with a heuristic scheme.

The remainder of the paper is organized as follows: Section \ref{Sec:RelatedWork} describes the related work. Section \ref {Sec:SystemModel} presents the system model. Our problem is formulated in Section \ref{Sec:ProblemFormulation}. The analysis about the JTCSP is given in Section \ref{Sec:GameAnalysis}, and a solution is proposed in Section \ref{Sec:Solution}. Section \ref{Sec:PerformanceEvaluation} presents the simulation results. Section \ref{Sec:Conclusion} concludes this paper.

\section{Related Works} \label{Sec:RelatedWork}

Transportation is the most important task of PV systems \cite{zhu2015PublicVehicle} \cite{zhu2013lane-change}, wherein path planning strategies \cite{zhu2015RidesharingComfort} directly determine the quality of service (QoS) of passengers. Some path planning solutions in PV systems or PV-like systems (e.g., taxi sharing systems) have been proposed. Zhu \etal \cite{zhu2016PublicVehicle} propose a path planning strategy with balanced QoS (short waiting time and limited detour) in PV systems. Atasoy \etal \cite{atasoy2015FMOD} propose a flexible mobility-on-demand system utilizing three services: taxi, shared-taxi and mini-bus, which can balance consumer surplus and the operator's profits. Herbawi \etal \cite{herbawi2012ridematching} propose a genetic algorithm to solve the ride-matching problem with time windows in ride-sharing. Fagnant \etal \cite{fagnant2014shared-autonomous-vehicle} present a customer waiting time model of shared autonomous vehicles, and find that it can save 10 times the number of cars needed for self-owned vehicle trips. Jung \etal \cite{jung2016Shared-Taxi-Dispatch} propose a dynamic dispatching algorithm in taxi sharing systems using the hybrid simulated annealing aiming at minimizing the total travel time and maximizing the profits. However, the above works do not consider the energy demands of vehicles.

To ensure the long-term transportation of EVs, charging strategies with low costs are required. Recently, the charging problem of EVs becomes an important topic, e.g., the charging cost minimization problem \cite{rivera2016EV-aggregator} \cite{franco2014milp-ev-charging} and the waiting time minimization problem \cite{zhu2014charging-scheduling} \cite{zhang2014EV-charging-scheduling}. Rivera \etal \cite{rivera2016EV-aggregator} propose an optimization framework for achieving computational scalability based on the alternating directions method of multipliers with two objectives, valley filling and cost-minimal charging with grid capacity constraints. Franco \etal \cite{franco2014milp-ev-charging} present a mixed-integer linear programming model to minimize the total energy costs, which corresponds to the EV charging coordination problem in electrical distribution systems. Zhu \etal \cite{zhu2014charging-scheduling} propose two centralized algorithms to minimize the total charging time, i.e., EV travelling time, the queuing time, and the actual charging time at charging stations. Zhang \etal \cite{zhang2014EV-charging-scheduling} use independent Markov processes to model the uncertainty of the arrival of EVs, the intermittence of the renewable energy, and the variation of grid power prices, and propose a Markov decision process framework to minimize the mean waiting time for EVs under the long-term constraint on the cost. Weerdt \etal \cite{de2016EV-routing} propose a routing policy that minimizes EVs' expected journey time including the waiting time at charging stations.

Game theoretic approaches are applied to transportation systems to optimize traffic flows, travel time, etc. Groot \etal \cite{groot2015reverse-stackelberg} propose three approaches based on reverse Stackelberg game to reach a system-optimal distribution of traffics on freeway routes, e.g., to minimize the total trip time and to reduce traffic emissions in urban areas. The optimization of the travel time and the traffic flows at intersections is a key issue since it is one of the major bottlenecks for urban traffic congestions. To reduce the average travel time delay at uncontrolled intersections, Elhenawy \etal \cite{elhenawy2015game-traffic} propose a chicken-game-based algorithm for controlling autonomous vehicle movements equipped with cooperative adaptive cruise control systems. Bui \etal \cite{bui2017game-traffic-light} propose a Cournot model and a Stackelberg model to optimize the traffic flows for smart traffic light control at intersections. Farokhi \etal \cite{farokhi2015truck-game} adopt an atomic congestion game model to capture the interactions between the car traffic and the truck platooning incentives.

Several novel pricing mechanisms and scheduling strategies have been proposed in on-demand mobility systems. Drwal \etal \cite{drwal2017PricingMechanism} propose pricing mechanisms to balance the demands of different parking stations, reduce the cost of manual relocations of vehicles, and maximize the operator's revenue. However, this solution only considers the transportation performance, and does not consider the energy demands of vehicles from smart grids. Rigas \etal \cite{rigas2015EV-scheduling} study the scenario where EVs are hired in on-demand mobility systems and there exist multiple battery swap facility points. Its aim is to maximize the number of passengers that are served. They focus on the transportation performance with limited travel ranges and simplify the charging problem as swapping batteries in facility points, ignoring the effects of the real-time pricing of smart grids, e.g., how to reduce the charging costs with the guarantee of providing transportation services. Our solution proposed in this paper considers the joint transportation and charging strategies of vehicles over one day and multiple days under the scenario of real-time electricity pricing.

\section{System Model}\label{Sec:SystemModel}

\begin{table}[tbp]
  \centering
  \caption{Variables and notations}
  \label{Tab:VariablesNotations}
  \begin{tabular}{ l l }
  \hline
  $J$ & number of all PVs in the city. \\
  $c$ & battery capacity of PVs. \\
  $r$ & charged energy in a time slot (an hour be default).\\
  $\mathcal{I}$ & set of all PVGs in time slot $t$.\\
  $I$ & number of all PVGs in time slot $t$ in the city. \\
  $m_{i, t}$ & number of unfully charged PVs in region $i$ in time slot $t$.\\
  $\mathcal{T}$ & time slot set. \\
  $t$ & time slot in $\mathcal{T}$. \\
  $x_{i, t}$& strategy of PVG $i$ in time slot $t$. \\
  $d_{i, t}$ & transportation demands of PVG $i$ in time slot $t$. \\
  $d_t$ & total transportation demands of all PVGs in time slot $t$. \\
  $E^{\text{r}}_t$ & total remaining energy of all PVs in time slot $t$. \\
  $p_t$& real-time electricity price in time slot $t$.\\
  $u_{i, t}$ & utility function of PVG $i$ in time slot $t$. \\
  $\Omega_{i, t}$ & feasible strategy set of PVG $i$ in time slot $t$. \\
  $\Omega_t$  & feasible strategy sets of all PVGs in time slot $t$.\\
  $\mathbf{x}_t$   & strategies of all PVGs in time slot $t$. \\
  $\mathbf{u}_t$ & objective function vector of all PVGs in time slot $t$. \\
  $E^{\text{-}}_t$ & consumed energy of all PVs in time slot $t$. \\
  $E^{\text{+}}_t$ & charging demands of all PVs in time slot $t$. \\
  $n_{i, t}$ & number of transportation PVs in region $i$ in time slot $t$.\\
  $a_{i, t}$ & number of all PVs in region $i$ in time slot $t$.\\
  $f_{i, t}$ & number of fully (or near fully) charged PVs in region $i$ \\
  & in time slot $t$. \\
  $\phi_{i, t}$ & number of transportation PVs of PVG $i$ in time slot $t$.\\
  $\psi_{i, t}$ & number of charging PVs of PVG $i$ in time slot $t$.\\
  \hline
  \end{tabular}
\end{table}

We first describe the PV model, and then present the smart grid model. All the important variables and notations in this article are summarized in Table~\ref{Tab:VariablesNotations}.

\subsection{PV Model}

We assume all $J$ PVs in a city have identical battery capacity $c$ (kwh) and identical charged energy $r$ (kwh) in a time slot (an hour by default). To explore the time-dependent transportation and charging patterns, we introduce time slot set $\mathcal{T} = \{ 0, 1, \ldots, T - 1\}$. The PVs are considered fully charged if their remaining energy is greater than $(c - r)$ since the charging time is less than a time slot, which is too short, and the others are unfully charged. Fully charged PVs can only choose to serve passengers. While unfully charged PVs have two choices, transportation or charging. To ensure the transportation of the next time slot, at each time the remaining energy of a PV should not be less than the consumed energy in a time slot.

The city is divided into $I$ regions, and in each region $i$, $m_{i, t}$ denotes the number of unfully charged PVs in time slot $t$ and they form public vehicle group (PVG) $i$. All PVGs in time slot $t$ are denoted by a set $\mathcal{I}$ with the number $I$. We see that $\mathcal{I}$ and $I$ may vary over time and should be updated in each time slot, while $J$ does not change.

\textbf{PVG Transportation Model}: The strategy of PVG $i \in \mathcal{I}$ in time slot $t$ is denoted by $x_{i, t} \in [0, 1]$, which is the ratio of PVs in PVG $i$ that will be used for transportation, and the remaining portion will charge their batteries. In each time slot $t$, the cloud calculates the transportation demands of PVG $i$ denoted by $d_{i, t}$, i.e., the number of transportation PVs. The total transportation demands of all PVGs in time slot $t$ is $d_t = \sum_i d_{i, t}$.

\textbf{PVG Charging Model}: We assume that each charging PV will charge batteries with the identical charged energy $r$ in a time slot. For PVG $i$, the charged energy in the time slot $[t, t + 1]$ are $r \, m_{i, t} \, (1 - x_{i, t})$.

Discharging in peak electricity price periods can bring more benefits for EVs. However, we do not consider discharging in PV systems for several reasons. 1) The most important actions of PVs are transportation and charging since transportation brings profits and charging ensures the ability of making profits. 2) In general, the transportation peak time coincides with the electricity price peak time according to the taxi trip data \cite{NewYorkCityTaxiData} and grid data \cite{SmartGridDataNewYork} in New York City. The urgent action at the energy consumption peak time \cite{Yang2014EV-charging} is charging not discharging. Thus, if we consider discharging, the transportation demands may not be fully satisfied. 3) Even we consider discharging, the profits through V2G are 90$\sim$4,000 US dollars per year per vehicle \cite{yilmaz2013impact-V2G}, which can not match with that from transportation services. Due to the above reasons, we assume that PVs only have two actions: transportation and charging.

\subsection{Smart Grid Model}

Generally, smart grids serve primary consumers such as industries, houses, and offices with high priority. After meeting the demands of the primary consumers, smart grids wish to sell energy to secondary users, e.g., PVs. We assume that, the cloud of PV systems can requests a certain amount of energy from smart grids. In the electricity market, the real-time prices (RTPs) \cite{giraldo2016RTP} vary with the total demands, which can reduce the peak-to-average load ratio through encouraging consumers to shift their usages to off-peak hours. Let the RTP in time slot $t$ be $p_t$. The electricity pricing model is
\begin{eqnarray}
  p_t &=& \alpha_0 \left( \frac{L_t}{C_0} \right)^{k_0},  \\
  L_t &=& \sum_i L_{i, t},
\end{eqnarray}
where $\alpha_0$ and $k_0$ are predefined pricing constants by the smart grids, $C_0$ is the capacity of the electricity markets \cite{wang2009optimizing-energy-use}, $L_{i, t}$ is the electricity load of region $i$ in time slot $t$, and $L_t$ is the total electricity load in time slot $t$, including that from industries, houses, offices and secondary users such as EVs. It has been proven that the overall energy costs are minimal when electricity consumptions are balanced in each equal-size time slot \cite{wang2009optimizing-energy-use}. Then the charging cost of PVG $i$ in the time slot $[t, t + 1]$ is $r \, p_t \,m_{i, t} \, (1 - x_{i, t})$.

\section{Problem Formulation} \label{Sec:ProblemFormulation}

We first present a novel utility model for PVGs, and then formulate the JTCSP.

\subsection{PVG Utility Model}

The utility model for each PVG should reflect its transportation and charging willingness considering the trip demands of passengers, electricity prices, and its own energy states. The utility function of PVG $i$ is formulated as
\begin{eqnarray}
&& u_{i, t} = -(m_{i, t} \, x_{i, t} - d_{i, t})^2 + \alpha_1 \, m_{i, t} \, \text{ln} (2 - x_{i, t})  \nonumber \\
&& ~~~~~~~\, - \alpha_2 \, p_t \, m_{i, t} \, (1 - x_{i, t}) , \label{Eqn:u_i}
\end{eqnarray}
where $\alpha_1$ and $\alpha_2$ are constants. $u_{i, t}$ has the following terms:
\begin{itemize}
  \item{} The term $-(m_{i, t} \, x_{i, t} - d_{i, t})^2$ denotes the transportation utility. Clearly, when the number of transportation PVs of PVG $i$ equals $d_{i, t}$, it obtains the maximum utility. Note that if $m_{i, t} x_{i, t}$ is larger than the critical value $d_{i, t}$, the utility decreases for several reasons such as waiting costs and parking fees, and if it is less than $d_{i, t}$, the utility also decreases since it can not satisfy the transportation demands.
  \item{} The term $\alpha_1 \, m_{i, t} \, \text{ln} (2 - x_{i, t})$ denotes the satisfaction level for charged energy with the weight $\alpha_1$. The charged energy $r$ in a time slot is omitted since it is included in the weight $\alpha_1$, which can also be seen in the third term of the utility model. We adopt a logarithmic utility model to denote the satisfaction level since it can quantify user satisfactions with diminishing returns \cite{lee2015energy-game}, which is widely used in designing the utility for energy consumers, e.g., \cite{liu2017energy-Stackelberg} \cite{park2016contribution-energy-game}. We use $(2 - x_{i, t})$ instead of $(1 - x_{i, t})$ to ensure that it is always positive and the logarithmic function is always available.
  \item{} The term $- \alpha_2 \, p_t \, m_{i, t} \, (1 - x_{i, t})$ denotes the charging fees with the weight $-\alpha_2$.
\end{itemize}

\subsection{Problem Statement}

The transportation and charging strategies of PVs are coordinated rather than directly controlled for the following reasons. 1) This can reduce the cost of data transmission through 4G/5G. During the charging periods, some data should be exchanged between PVs and the cloud, e.g., the remaining energy of batteries, the electricity prices, and some information about charging stations. 2) This can reduce the cost of data storage. PVs are driverless/self-driving EVs, which generate a large amount of data even within a short period, e.g., one GB data per second \cite{williams2013self-driving-car-1GB-per-second}. 3) More computing resources can be assigned to transportation to provide better services for passengers. PV systems need more computing resources in solving path planning problems \cite{Zhu2016TrafficBigData} \cite{zhu2017PathPlanning} based on traffic big data. Moreover, the cloud has to predict in real-time the vehicle speed \cite{jiang2016vehicleSpeedPrediction}, transportation demands \cite{chen2016Traffic-Demand-Forecast} in each region of smart cities.

We explore the cake cutting game \cite{myerson2013game-theory} \cite{chen2010cake-cutting} \cite{branzei2013cake-cutting} to coordinate the transportation and charging strategies for all PVGs. The cake cutting game is one of the most fundamental games for fair division with the aiming of dividing the cake (here, it means the transportation and charging resouces) fairly. The cake cutting game can encapsulate the important problem of allocating heterogeneous resources among multiple players with different preferences. Each PVG has its own transportation demands, and energy states. Therefore, we adopt the cake cutting game to analyze the JTCSP in PV systems.

PVGs are noncooperative since they do not communicate with each other, but they may interact with the cloud and smart grids by the controlled signaling through smart meters. The cloud calculates the charging demands of all PVs $\{ E^{\text{+}}_t \}_{t \in \mathcal{T}}$ ahead of the day, and the transportation demands of each PVG $\{d_{i, t}\}_{i \in \mathcal{I}}$ and the total transportation demands $\{ d_t \}_{i \in \mathcal{I}}$ in each time slot. Then PVGs select their best response strategies $\{x_{i, t}\}_{i \in \mathcal{I}}$ to maximize their utilities.

The optimization problem of the PVG $i$ in time slot $t$ is formulated as
\begin{eqnarray}
&& \max \limits_{x_{i, t}} ~ u_{i, t}, \label{Problem:u_i}\\
&& \text{s.t.}~ r \sum_{i \in \mathcal{I}} m_{i, t} \, (1 - x_{i, t}) = E^{\text{+}}_t, \\
&& ~~~~~ \sum_{i \in \mathcal{I}} m_{i, t} \, x_{i, t} \geq d_t,  \\
&& ~~~~~ x_{i, t} \in [0, 1], \forall i \in \mathcal{I}.
\end{eqnarray}
We see that, for any PVG, the first and second constraints are shared by all PVGs. The first constraint indicates that the total charged energy should be equal to the charging demands $E^{\text{+}}_t$ in time slot $t$ from smart grids. The second constraint indicates that the total number of transportation PVs should not be less than the transportation demands calculated by the cloud, $d_t$. The third constraint indicates the bound of the PVG's strategy $x_{i, t}$. Herein, the second constraint can always hold since the transportation demand $d_t$ is determined by the cloud using a scheduling strategy at the beginning of each time slot, which is described in \textbf{Algorithm 2} of Section~\ref{Sec:Solution}-C. If the transportation demands are too high to be covered for a given set of PVs, the cloud will use more PVs to serve passengers except the charging PVs, and accordingly, the passengers may have to wait for longer time.

Let $\Omega_{i, t}$ be the feasible strategy set of PVG $i$ in time slot $t$ which satisfies the three constraints in Problem (\ref{Problem:u_i}). Let $\Omega_t$ be the feasible strategy sets of all PVGs in time slot $t$, i.e., $\Omega_t := \Omega_{1, t} \times \ldots \Omega_{I, t}$.

We formulate the JTCSP based on PVGs rather than individual PVs for several reasons: 1) This can reduce the operating complexity especially in large cities with a large number PVs, since the number of PVGs is much less than that of PVs. 2) This can save a lot of computing resources such that the energy management costs can be reduced.

The JTCSP is a cake cutting game \cite{myerson2013game-theory} \cite{chen2010cake-cutting}, and in time slot $t$ it is defined by
\begin{equation}\label{Eqn:GameDefinition} \nonumber
( \mathcal{I}, \Omega_t, \mathbf{x}_t, \mathbf{u}_t),
\end{equation}
where
\begin{itemize}
  \item $\mathcal{I}$ denotes the players (PVGs) in time slot $t$ in the cake cutting game;
  \item $\Omega_t := \Omega_{1, t} \times \ldots \Omega_{I, t}$ is the strategy set of all PVGs in time slot $t$;
  \item $\mathbf{x}_t := ( x_{i, t} )_{i \in \mathcal{I}}$ denotes the strategies of all PVGs in time slot $t$;
  \item $\mathbf{u}_t := ( u_{i, t} )_{ i \in \mathcal{I}}$ denotes the objective function vector of all PVGs to maximize in time slot $t$.
\end{itemize}

\section{Problem Analysis} \label{Sec:GameAnalysis}

The first and second constraints in Problem (\ref{Problem:u_i}) imply that the action of a PVG is constrained by the actions of other PVGs, which are known as shared/coupled constraints \cite{kulkarni2014shared-constraint-game}. The games with shared constraints bring particular complexity to tackle. In this section, we prove the existence and uniqueness of equilibrium for our JTCSP.

\begin{lemm} \label{lemm:At-least-one-equilibrium}
An equilibrium point exists in the JTCSP for PV systems.
\end{lemm}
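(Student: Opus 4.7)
The plan is to invoke Rosen's 1965 existence theorem for concave $n$-person games with shared (coupled) strategy sets. Rosen's theorem requires (i) $\Omega_t$ is non-empty, compact and convex; (ii) each $u_{i,t}$ is continuous on $\Omega_t$; and (iii) for every fixed opponent profile $\mathbf{x}_{-i,t}$, the player-$i$ section $x_{i,t} \mapsto u_{i,t}(x_{i,t}, \mathbf{x}_{-i,t})$ is concave. If all three hold, a generalized Nash equilibrium exists, which is precisely the notion used throughout Section~\ref{Sec:GameAnalysis}.

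For condition (i), the constraints in Problem~(\ref{Problem:u_i}) are all affine in $\mathbf{x}_t$: one linear equality, one linear inequality, and the box $[0,1]^I$. Their intersection is therefore convex, and is compact as a closed subset of the cube. Non-emptiness is the only delicate point, and I would handle it by appealing to the way the cloud selects $d_t$ and $E^{+}_t$ in Algorithm~2 of Section~\ref{Sec:Solution}, i.e., so that the two scalar conditions $E^{+}_t \leq r \sum_i m_{i,t}$ and $d_t \leq \sum_i m_{i,t} - E^{+}_t/r$ both hold; a feasible $\mathbf{x}_t$ can then be constructed by distributing the total transport and charging loads proportionally to the $m_{i,t}$. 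Condition (ii) is immediate: the quadratic and linear pieces of (\ref{Eqn:u_i}) are polynomials in $x_{i,t}$, and $\ln(2 - x_{i,t})$ is continuous on $[0,1]$ since its argument is bounded below by $1$. For (iii), I would differentiate (\ref{Eqn:u_i}) twice to obtain
\begin{equation*}
\frac{\partial^2 u_{i,t}}{\partial x_{i,t}^2} = -2\, m_{i,t}^2 - \frac{\alpha_1\, m_{i,t}}{(2 - x_{i,t})^2},
\end{equation*}
which is strictly negative whenever $\alpha_1 \geq 0$ and $m_{i,t} \geq 1$, so $u_{i,t}$ is in fact strictly concave in its own strategy.

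With (i)--(iii) verified, Rosen's theorem directly yields an equilibrium of the cake cutting game $(\mathcal{I}, \Omega_t, \mathbf{x}_t, \mathbf{u}_t)$, completing the argument. The main obstacle I anticipate is not any of the analytic conditions but the feasibility of the coupled equality/inequality constraints: if the cloud's choice of $(d_t, E^{+}_t)$ is inconsistent with the aggregate resource budget $r \sum_i m_{i,t}$, then $\Omega_t$ is empty and the lemma is vacuous. Consequently I would make the appeal to the scheduling algorithm explicit, and adopt as a standing hypothesis that the cloud produces admissible pairs $(d_t, E^{+}_t)$, so that condition (i) is genuinely non-trivial and the remaining concavity and continuity verifications are the easy part of the argument.
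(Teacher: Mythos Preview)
Your proposal is correct and follows essentially the same route as the paper: both arguments identify the JTCSP as a concave $n$-person game with shared constraints and invoke Rosen's existence theorem (Theorem~1 in \cite{rosen1965existence-uniqueness-equilibrium}). You are in fact more thorough than the paper's own proof, which simply asserts continuity, concavity, and convexity without the explicit second-derivative computation or the feasibility discussion you supply; the paper defers the Hessian calculation to the proof of Theorem~\ref{theo:unique-normalized-equilibrium} and does not address non-emptiness of $\Omega_t$ at all.
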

\begin{proof}
Since all PVGs have the same shared constraints, the JTCSP is a generalized Nash equilibrium problem (GNEP). The GNEP extends the classical Nash equilibrium problem by assuming that each player's feasible strategy set can depend on the rival players' strategies. The utility functions of PVGs are continuous and concave and their strategy sets are closed and convex. Therefore, the JTCSP is a concave $n$-person game. According to Rosen's work (Theorem 1 in \cite{rosen1965existence-uniqueness-equilibrium}), ``a Nash equilibrium point exists for every concave $n$-person game", Lemma \ref{lemm:At-least-one-equilibrium} is obtained.
\end{proof}

From Lemma \ref{lemm:At-least-one-equilibrium}, we know that there may exist multiple Nash equilibria in the JTCSP in PV systems. Now, consider the following optimization problem for each utility function:
\begin{eqnarray}
&& \max \limits_{\mathbf{x}_t} ~ \sigma(\mathbf{x}_t, \mathbf{w}_t) = \max \limits_{\mathbf{x}_t} ~ \sum_{i \in \mathcal{I}} w_{i, t} \, u_{i, t}, \label{Eqn:SumUtility}  \\
&& \text{s.t.}~ \bm{\theta}(\mathbf{x}_t) \leq 0,
\end{eqnarray}
where $w_{i, t}$ is a weight factor, and $\bm{\theta}(\mathbf{x}_t)$ = $[\theta_1(\mathbf{x}_t),$ $\ldots,$ $\theta_{M}(\mathbf{x}_t)]^T$ collects $M$ constraint sets which constitute a set $\mathcal{M} = \{ 1, \ldots, M \}$. Here, $M = 3$, which can be seen from the constraints form Problem (\ref{Problem:u_i}). Denote the Lagrange multiplier vector for PVG $i$ as $\bm{\lambda}_{i, t}$, and $\bm{\lambda}_t = (\bm{\lambda}_{1, t}, \bm{\lambda}_{2, t}, \ldots, \bm{\lambda}_{I, t})^T$.

The generalized Nash equilibrium $\mathbf{x}_t \in \Omega_t$ is called a normalized Nash equilibrium (NNE) with weights if and only if it satisfies the following Karush-Kuhn-Tucker (KKT) \cite{wu2007kkt} conditions:
\begin{eqnarray}
&&\hspace{-0.3in} -w_{i, t}\nabla_{x_{i, t}} u_{i, t} + \bm{\lambda}^T_{i, t} \nabla_{x_{i, t}} \bm{\theta}(x_{i, t}, x_{-i, t}) = 0, \label{Eqn:KKT1} \\
&&\hspace{-0.3in} \bm{\lambda}^T_{i, t} \, \bm{\theta}(x_{i, t}, x_{-i, t}) =  0,  \label{Eqn:KKT2}\\
&&\hspace{-0.3in} \bm{\lambda}_{i, t} \geq 0, \label{Eqn:KKT3} \\
&&\hspace{-0.3in} \bm{\theta}(x_{i, t}, x_{-i, t}) \leq  0, \label{Eqn:KKT4}
\end{eqnarray}
where
\begin{equation}
\bm{\lambda}_{i, t} = \frac{\bar{\bm{\lambda}}_t}{w_{i, t}}, \forall i \in \mathcal{I}. \label{Eqn:lambda}
\end{equation}

\begin{theo} \label{theo:unique-normalized-equilibrium}
A unique normalized Nash equilibrium (NNE) exists for the JTCSP in PV systems.
\end{theo}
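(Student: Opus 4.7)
The plan is to invoke Rosen's uniqueness result for concave $n$-person games with shared constraints (Theorem 2 in \cite{rosen1965existence-uniqueness-equilibrium}), which states that a sufficient condition for the existence of a unique normalized Nash equilibrium is that the game be \emph{diagonally strictly concave} with respect to a positive weight vector $\mathbf{w}_t$. Concretely, I need to exhibit weights $w_{i,t} > 0$ such that the pseudo-gradient
\begin{equation*}
g(\mathbf{x}_t, \mathbf{w}_t) = \bigl( w_{i,t} \nabla_{x_{i,t}} u_{i,t} \bigr)_{i \in \mathcal{I}}
\end{equation*}
satisfies $(\mathbf{x}_t - \mathbf{y}_t)^T \bigl[ g(\mathbf{y}_t,\mathbf{w}_t) - g(\mathbf{x}_t,\mathbf{w}_t) \bigr] > 0$ for all distinct feasible $\mathbf{x}_t, \mathbf{y}_t \in \Omega_t$. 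A clean sufficient condition, as established by Rosen, is that the symmetric part of the Jacobian $G(\mathbf{x}_t,\mathbf{w}_t)$ of $g$ with respect to $\mathbf{x}_t$ be negative definite on $\Omega_t$.

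First, I would compute this Jacobian explicitly. A crucial structural observation is that the utility $u_{i,t}$ in~(\ref{Eqn:u_i}) depends on $\mathbf{x}_t$ only through its own component $x_{i,t}$; the coupling among PVGs enters exclusively through the shared constraints. Consequently $\partial u_{i,t}/\partial x_{j,t} = 0$ for $j \neq i$, so $G(\mathbf{x}_t,\mathbf{w}_t)$ is \emph{diagonal}, with entries
\begin{equation*}
[G]_{ii} = w_{i,t}\, \frac{\partial^2 u_{i,t}}{\partial x_{i,t}^2} = -w_{i,t}\left[ 2 m_{i,t}^2 + \frac{\alpha_1\, m_{i,t}}{(2 - x_{i,t})^2} \right].
\end{equation*}
For any choice of positive weights $w_{i,t}$, positive population $m_{i,t}$, and $\alpha_1 > 0$, each diagonal entry is strictly negative on $x_{i,t} \in [0,1]$. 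Therefore $G + G^T = 2G$ is negative definite throughout $\Omega_t$, which yields diagonal strict concavity.

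Given diagonal strict concavity, Rosen's theorem then delivers uniqueness of the normalized Nash equilibrium: the KKT system~(\ref{Eqn:KKT1})--(\ref{Eqn:KKT4}) with the shared multiplier scaling~(\ref{Eqn:lambda}) admits at most one solution in $\Omega_t$. Combined with Lemma~\ref{lemm:At-least-one-equilibrium}, which guarantees existence of a generalized Nash equilibrium via concavity and compactness, and the fact that any NNE is in particular a generalized Nash equilibrium, this establishes the theorem. The main obstacle is really just verifying the sign structure of $G + G^T$; once the key observation that $u_{i,t}$ is separable across players (coupling is only through constraints) is noted, the Jacobian collapses to a diagonal negative-definite matrix and the rest is a direct appeal to \cite{rosen1965existence-uniqueness-equilibrium}. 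One technical caveat to handle in passing is the boundary case $m_{i,t}=0$, where PVG $i$ is empty and can be dropped from the game without affecting the equilibrium analysis on the remaining nonempty PVGs.
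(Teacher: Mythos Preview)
Your proposal is correct and follows essentially the same approach as the paper: both invoke Rosen's Theorem~2 via diagonal strict concavity, observe that the pseudogradient's Jacobian $G$ is diagonal because each $u_{i,t}$ depends only on its own $x_{i,t}$, compute the identical second derivative $-2m_{i,t}^2 - \alpha_1 m_{i,t}/(2-x_{i,t})^2 < 0$, and conclude that $G+G^T=2G$ is negative definite. Your write-up is slightly more explicit about the separability structure and the degenerate case $m_{i,t}=0$, but the argument is the same.
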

\begin{proof}
In PV systems, PVGs aim to maximize their utilities through buying low-cost energy from smart grids and providing transportation services for more profits. The objective function of each player in a jointly convex GNEP is continuously differentiable. According to Rosen's work (Theorem 2 in \cite{rosen1965existence-uniqueness-equilibrium}), ``there exists a unique NNE in concave $n$-player games if the joint utility function $\sigma(\mathbf{x}_t, \mathbf{w}_t) = \sum_{i \in \mathcal{I}} w_{i, t} \, u_{i, t}$ with $\mathbf{w}_t = [w_{1, t}, \ldots, w_{I, t}]$ is diagonally strictly concave". Next, we prove $\sigma(\mathbf{x}_t, \mathbf{w}_t)$ is diagonally strictly concave.

We define $\mathbf{g}(\mathbf{x}_t, \mathbf{w}_t)$ as the pseudogradient for $\sigma(\mathbf{x}_t, \mathbf{w}_t)$: $\mathbf{g}(\mathbf{x}_t, \mathbf{w}_t) =$ $[w_{1, t} \nabla_{x_{1, t}} \, u_{1, t} \, (x_{1, t}),$ $w_{2, t} \nabla_{x_{2, t}} \, u_{2, t} \, (x_{2, t}),$ $\ldots,$ $w_{I, t} \nabla_{x_{I, t}} \, u_{I, t} \, (x_{I, t})]^T$. According to Rosen's work (Theorem 6 in \cite{rosen1965existence-uniqueness-equilibrium}), ``a sufficient condition that $\sigma(\mathbf{x}_t, \mathbf{w}_t)$ be diagonally strictly concave for $\mathbf{x}_t \in \Omega_t$ and $\mathbf{w}_t > 0$ is that the symmetric matrix $[\mathbf{G}(\mathbf{x}_t, \mathbf{w}_t) + \mathbf{G}^T(\mathbf{x}_t, \mathbf{w}_t)]$ be negative definite, where $\mathbf{G}(\mathbf{x}_t, \mathbf{w}_t)$ is the Jacobian with respect to $\mathbf{x}_t$ of $\mathbf{g}(\mathbf{x}_t, \mathbf{w}_t)$". The second derivative on the utility function $u_{i, t}$ in (\ref{Eqn:u_i}) with respect to $x_{i, t}$ is
\begin{equation}
\kappa_{i, t} = \frac{\partial^2 u_{i, t}}{\partial x_{i, t}^2} = -2 \, m^2_{i, t} - \frac{\alpha_1 \, m_{i, t}}{(x_{i, t} - 2)^2} .
\end{equation}
Clearly, $\kappa_{i, t} < 0, \forall i \in \mathcal{I}, t \in \mathcal{T}$. So $u_{i, t}$ is strictly concave. The Jacobian of $\mathbf{g}(\mathbf{x}_t, \mathbf{w}_t)$ with respect to $\mathbf{x}_t$ is
\begin{eqnarray}
\hspace{0.13in} \mathbf{G}(\mathbf{x}_t, \mathbf{w}_t) =
\begin{bmatrix}
w_{1, t} \, \kappa_{1, t} & 0 & \cdots & 0 \\
0 & w_{2, t} \, \kappa_{2, t} & \cdots & 0 \\
\vdots & \vdots &  \ddots & \vdots\\
0 & 0 & \cdots & w_{I, t} \, \kappa_{I, t}
\end{bmatrix}.
\end{eqnarray}
Clearly, $\mathbf{G}(\mathbf{x}_t, \mathbf{w}_t)$ is negative definite, and the matrix $[\mathbf{G}(\mathbf{x}_t, \mathbf{w}_t) + \big( \mathbf{G}(\mathbf{x}_t, \mathbf{w}_t) \big)^T] = 2 \, \mathbf{G}(\mathbf{x}_t, \mathbf{w}_t)$ is also negative definite. So $\sigma(\mathbf{x}_t, \mathbf{w}_t)$ is diagonally strictly concave.

Hence, Theorem \ref{theo:unique-normalized-equilibrium} is proved.
\end{proof}

We see that different $\mathbf{w}_t$s will yield different NNEs. However, the NNE is unique for each fixed $\mathbf{w}_t$. Now and henceforth, we consider the NNE with the identical weights, i.e., $w_{1, t} = w_{2, t} = \ldots = w_{I, t} = 1$, and we get
\begin{equation}
\bm{\lambda}_{i, t} = \bar{\bm{\lambda}}_t, \forall i \in \mathcal{I}.
\end{equation}
We know that, all PVGs have the same Lagrange multipliers for all constraints.

Next, we explore the relationships between a variational inequality problem and the JTCSP. The variational inequality problem VI$(F, \Omega)$ is to find a point $x^* \in \Omega$ such that
\begin{equation}
\langle F(x^*), x - x^* \rangle \geq 0, \forall x \in \Omega,
\end{equation}
where $\Omega$ is a closed and convex set, and $\langle \cdot \rangle$ denotes the inner product, and $F$ is a continuous function.

\begin{theo} \label{theo:NNE-VI}
The variational inequality problem VI$(\mathbf{F}_t, \Omega_t)$ with the constraint $\bm{\theta}(\mathbf{x}_t) \leq 0$ where $\mathbf{F}_t := -(\nabla_{x_{i, t}} u_{i, t})_{i \in \mathcal{I}}$ is monotonic. The solution to VI$(\mathbf{F}_t, \Omega_t)$ is the NNE with identical weights, and is also called a variational equilibrium. Therefore, the solution to VI$(\mathbf{F}_t, \Omega_t)$ is the solution to the JTCSP in PV systems.
\end{theo}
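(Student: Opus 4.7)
The plan is to establish Theorem \ref{theo:NNE-VI} in three sequential steps: (i) show that the map $\mathbf{F}_t$ is monotone on $\Omega_t$, so VI$(\mathbf{F}_t,\Omega_t)$ is a monotone variational inequality; (ii) verify that the KKT optimality system of VI$(\mathbf{F}_t,\Omega_t)$ coincides with the normalized KKT system (\ref{Eqn:KKT1})--(\ref{Eqn:KKT4}) under unit weights $w_{i,t}=1$, so every solution of the VI is a variational equilibrium, i.e.\ an NNE with identical weights; (iii) apply Theorem \ref{theo:unique-normalized-equilibrium} to conclude that this variational equilibrium solves the JTCSP.

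For step (i), I would exploit the fact that the utility $u_{i,t}$ in (\ref{Eqn:u_i}) depends only on the own variable $x_{i,t}$ (the constants $m_{i,t}$, $d_{i,t}$, $p_t$ and the weights are independent of $x_{-i,t}$). Consequently the Jacobian $\nabla_{\mathbf{x}_t}\mathbf{F}_t$ is the diagonal matrix with entries $-\kappa_{i,t}$, where $\kappa_{i,t}$ is the second derivative already computed in the proof of Theorem \ref{theo:unique-normalized-equilibrium}. Since $\kappa_{i,t}<0$ for every $i$ and $t$, this Jacobian is uniformly positive definite on the convex set $\Omega_t$, whence $\mathbf{F}_t$ is strongly monotone by the standard line-integral argument, and a fortiori $\langle\mathbf{F}_t(\mathbf{x})-\mathbf{F}_t(\mathbf{y}),\mathbf{x}-\mathbf{y}\rangle\geq 0$ for all $\mathbf{x},\mathbf{y}\in\Omega_t$.

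For step (ii), the constraint map $\bm{\theta}$ is affine (one equality for charging demand, one inequality for transportation demand) together with the box constraints $x_{i,t}\in[0,1]$, so $\Omega_t$ is closed, convex, and satisfies a Slater-type constraint qualification. The KKT conditions are therefore necessary and sufficient for VI$(\mathbf{F}_t,\Omega_t)$, and I would write them out as $\mathbf{F}_t(\mathbf{x}_t^\ast)+\nabla_{\mathbf{x}_t}\bm{\theta}(\mathbf{x}_t^\ast)^{T}\bar{\bm{\lambda}}_t=0$ with $\bar{\bm{\lambda}}_t\geq 0$, $\bar{\bm{\lambda}}_t^{T}\bm{\theta}(\mathbf{x}_t^\ast)=0$, and $\bm{\theta}(\mathbf{x}_t^\ast)\leq 0$. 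Reading this block-row by block-row, each player's equation reads $-\nabla_{x_{i,t}}u_{i,t}+\bar{\bm{\lambda}}_t^{T}\nabla_{x_{i,t}}\bm{\theta}(\mathbf{x}_t^\ast)=0$, which is precisely (\ref{Eqn:KKT1}) with $w_{i,t}=1$ and the common multiplier prescribed by (\ref{Eqn:lambda}); conditions (\ref{Eqn:KKT2})--(\ref{Eqn:KKT4}) follow identically. Thus solutions of VI$(\mathbf{F}_t,\Omega_t)$ coincide exactly with NNEs of equal weights, and step (iii) is immediate from Theorem \ref{theo:unique-normalized-equilibrium}.

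The main obstacle is step (ii), specifically the reconciliation of multipliers. In a GNEP each player may, in principle, attach an independent multiplier vector $\bm{\lambda}_{i,t}$ to the shared constraints, and different families of such multipliers produce genuinely different NNEs; only those solutions sharing a single $\bar{\bm{\lambda}}_t$ across all $i$ are variational equilibria. Verifying that the aggregated VI stationarity condition naturally forces this common multiplier --- which is legitimate here precisely because $u_{i,t}$ is separable in $x_{i,t}$ and $\bm{\theta}$ is smooth and convex --- is the crux of the argument. Monotonicity in step (i) and the appeal to Theorem \ref{theo:unique-normalized-equilibrium} in step (iii) are, by comparison, routine once the equivalence of KKT systems has been established.
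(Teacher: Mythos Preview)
Your proposal is correct and follows essentially the same route as the paper: both arguments show that the Jacobian of $\mathbf{F}_t$ is the diagonal matrix with positive entries $-\kappa_{i,t}$ (borrowed from the proof of Theorem~\ref{theo:unique-normalized-equilibrium}) to obtain (strict) monotonicity, and both match the KKT system of VI$(\mathbf{F}_t,\Omega_t)$ against (\ref{Eqn:KKT1})--(\ref{Eqn:KKT4}) with $w_{i,t}=1$ and a common multiplier $\bar{\bm{\lambda}}_t$ to identify VI solutions with the equal-weight NNE. Your write-up is in fact more careful than the paper's terse version, in that you explicitly flag the constraint qualification and the separability of $u_{i,t}$ in $x_{i,t}$ as the reason the aggregated stationarity condition decouples row by row.
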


\begin{proof}
The KKT conditions for VI$(\mathbf{F}_t, \Omega_t)$ are
\begin{eqnarray}
&& \mathbf{F}_t + \bar{\bm{\lambda}}^T_t \nabla_{\mathbf{x}_t} \bm{\theta}(\mathbf{x}_t) =  0,  \\
&& \bar{\bm{\lambda}}^T_t \,  \bm{\theta}(\mathbf{x}_t) =  0,   \\
&& \bar{\bm{\lambda}}_t \geq 0,  \\
&& \bm{\theta}(\mathbf{x}_t) \leq  0.
\end{eqnarray}
We can see that, the KKT conditions for VI$(\mathbf{F}_t, \Omega_t)$ are exactly (\ref{Eqn:KKT1})$\sim$(\ref{Eqn:KKT4}) with the same Lagrange multipliers for all PVGs in the JTCSP. If $w_{1, t} = w_{2, t} = \ldots = w_{I, t} = 1$, $-\mathbf{F}_t$ is $\mathbf{g}(\mathbf{x}_t, \mathbf{w}_t)$, and correspondingly, the Jacobian of $-\mathbf{F}_t$ is $\mathbf{G}(\mathbf{x}_t, \mathbf{w}_t)$. From Theorem \ref{theo:unique-normalized-equilibrium}, we know $\mathbf{G}(\mathbf{x}_t, \mathbf{w}_t)$ is negative definite, so $\mathbf{F}_t$ is positive definite. Therefore, $\mathbf{F}_t$ is strictly monotonic. So Theorem \ref{theo:NNE-VI} is proved. The method to  the variational inequality problem can be used to solve the GNEP \cite{facchinei2007GNEP}.
\end{proof}

\section{Proposed Solution} \label{Sec:Solution}

In this section, we first describe the basic idea of our proposed solution for solving the JTCSP in PV systems, and then present two key routines: charging scheduling and transportation scheduling, and finally detail the scheme to obtain the unique NNE.

Why is there an optimal solution to the JTCSP in PV systems? Essentially, both the transportation and charging strategies focus on the profits earned by providing services for passengers: transportation strategies focus on the current profits from passengers, while charging strategies focus on the future profits since the charged energy will be used for transportation to make profits. If we only focus on the current transportation, the future profits may not be fully obtained since some PVs may not have sufficient energy. If we only focus on the current charging, the charging costs can be minimized, while the profits from the current transportation may not be maximized.

\subsection{Basic Idea}

We assume that the trip requests of passengers and RTP of the next day are known in advance. The basic idea of our solution is as follows. 1) The cloud performs the charging scheduling one day ahead to calculate charging demands. 2) In each time slot, the cloud performs the transportation scheduling according to the real-time trip requests to calculate transportation demands. 3) We use a projection method to solve the variational inequality problem and obtain the best response strategies of PVGs. 4) Each PVG selects the a part of PVs to provide transportation services and others to charge batteries according to its best strategy. With respect to the transportation PVs, we use PCI \cite{zhu2016PublicVehicle} to schedule them to serve passengers, and with respect to the charging PVs, they travel to the nearest charging stations to charge batteries.

\subsection{Charging Scheduling}

We assume that the cloud knows the trip requests of one day ahead, therefore, the charging scheduling of PVs can be calculated. The objective of charging scheduling is to minimize the total charging costs considering of energy consumptions of PVs and RTPs. Suppose that the PVs have infinite energy and the energy consumption is positively correlated with the travel distance. We schedule all PVs using the PCI algorithm \cite{zhu2016PublicVehicle} to serve passengers, and then record the travel distance and consumed energy of all PVs in each time slot $\{ E^{\text{-}}_t \}_{t \in \mathcal{T}}$. The process of PCI is as follows: First, all requests will be sorted by their waiting time from the maximum to the minimum, i.e., the passengers with longest waiting time will have the highest scheduling rank. Second, insert each request to the path of each PV and then calculate the insertion cost (total travel distance of PVs) and detour ratio. Third, if we can find the ride-match and path with the minimum insertion cost from all paths within limited detour ratio, schedule the PV to serve the corresponding request, otherwise the request will be put to a waiting list.

\vspace{0.05in}
\vspace{0.05in}
\vspace{0.05in}
\begin{tabular}[tbp]{lp{0.45\textwidth}}
  \toprule
  \textbf{Algorithm 1}: Charging Scheduling One Day Ahead\\
  \toprule
  ~1:~Initialize the remaining energy for all PVs in \\
  ~~~~the city as infinity;\\
  ~2:~Use PCI \cite{zhu2016PublicVehicle} to schedule PVs to serve passengers;\\
  ~3:~Record $\{ E^{\text{-}}_t \}_{t \in \mathcal{T}}$; \\
  ~4:~Calculate $\{ E^{\text{+}}_t \}_{t \in \mathcal{T}}$ by solving Problem~ (\ref{Problem:PredictChargedEnergy}) using \\
  ~~~~a convex solver;\\
  ~5:~\textbf{return} $\{ E^{\text{+}}_t \}_{t \in \mathcal{T}}$;\\
  \bottomrule
\end{tabular}
\vspace{0.05in}
\vspace{0.05in}
\vspace{0.05in}

The optimization problem of charging scheduling is formulated as
\begin{eqnarray}
&&\hspace{-0.2in} \min ~ \sum_{t \in \mathcal{T}} p_t \, E^{\text{+}}_t    \label{Problem:PredictChargedEnergy}, \\
&&\hspace{-0.40in} \text{s.t.}~E^{\text{r}}_{t + 1} =  E^{\text{r}}_t - E^{\text{-}}_t + E^{\text{+}}_t, \forall t \in \{0, \ldots, T - 2\},  \\
&&\hspace{-0.40in}~~~~E^{\text{r}}_t \geq (1 + \rho) \max (E^{\text{-}}_t, J \, e^{\text{min}}), \forall t \in \{1, \ldots, T - 1\},  \\
&&\hspace{-0.40in}~~~~E^{\text{r}}_{T - 1} - E^{\text{-}}_{T - 1} +  E^{\text{+}}_{T - 1} \geq E^{\text{r}}_0,  \\
&&\hspace{-0.40in}~~~~0 \leq E^{\text{+}}_t \geq (J - d_t) \, r, \forall t \in \{0, \ldots, T - 1\}, \\
&&\hspace{-0.40in}~~~~0 \leq E^{\text{r}}_t \leq J \, c, \forall t \in \{0, \ldots, T - 1\},
\end{eqnarray}
where $E^{\text{+}}_t$ and $ E^{\text{-}}_t$ are the charging demands and consumed energy of all PVs in time slot $t$ respectively, $E^{\text{r}}_t$ is the total remaining energy of all PVs in time slot $t$, $\rho > 0$ is a constant, $J$ is the number of all PVs in the city, and $e^{\text{min}}$ is the minimum energy of any PV to travel to the nearest the charging station. The first constraint is on the relationship between the consumed energy, the charging demands (charged energy), and the remaining energy in two consecutive time slots. The second constraint means that the remaining energy in time slot $t$ is at least $(1 + \rho)$ of the consumed energy used on transportation, which ensures the travel of the PVs in the next time slot, at the same time, the remaining energy is no less than the $(1 + \rho)$ of the energy consumed on the travel to charging stations. We assume that the final remaining energy is not less than the initial energy, which is reflected in the third constraint. The fourth constraint implies that the charged energy has a upper bound if all PVs except for the transportation ones decide to charge batteries. The fifth constraint points out that the remaining energy should not exceed the total energy capacities of all PVs. Problem (\ref{Problem:PredictChargedEnergy}) is a standard convex problem, and can be solved by any standard convex solver. The procedure for charging scheduling of one day is summarized in \textbf{Algorithm 1}.

\subsection{Transportation Scheduling} \label{Sec:PredictTransportationDemand}

In time slot $t$, the cloud calculates transportation scheduling, i.e., the number of transportation PVs, $n_{i, t}$, in each region $i$ given the trip requests using a vehicle scheduling scheme. We get the number of unfully charged PVs in region $i$, i.e., the number of PVs in PVG $i$:
\begin{equation}\label{Eqn:PVG-NumPVs}
m_{i, t} = a_{i, t} - f_{i, t},
\end{equation}
where $a_{i, t}$ is the number of all PVs in region $i$ in time slot $t$, and $f_{i, t}$ is the number of fully (or near fully) charged PVs in region $i$ in time slot $t$. In time slot $t$ we use PCI \cite{zhu2016PublicVehicle} to schedule PVs to serve passengers, and the transportation demands of PVG $i$ in time slot $t$ is
\begin{equation}\label{Eqn:NumExpPVs}
d_{i, t} = \max( n_{i, t} - f_{i, t}, 0).
\end{equation}
The total transportation demands in time slot $t$ is
\begin{equation}\label{Eqn:MinNumTransportationPVs}
d_t = \sum_i d_{i, t}.
\end{equation}

\vspace{0.05in}
\vspace{0.05in}
\begin{tabular}[tbp]{lp{0.45\textwidth}}
  \toprule
  \textbf{Algorithm 2}: Transportation Scheduling in Time Slot $t$ \\
  \toprule
  ~1:~Record the initial states of all PVs and passen-\\
  ~~~~gers;\\
  ~2:~Use PCI \cite{zhu2016PublicVehicle} to serve the passengers in the time  \\
  ~~~~slot $[t, t+1]$; \\
  ~3:~Record the number of transportation PVs in each\\
  ~~~~region $\{ n_{i, t} \}_{i \in \mathcal{I}}$;\\
  ~4:~Calculate $d_{i, t}$ and $d_t$ using (\ref{Eqn:NumExpPVs}) and (\ref{Eqn:MinNumTransportationPVs}) respec-\\
  ~~~~tively; \\
  ~5:~Put all PVs and passengers to initial states;\\
  ~6:~\textbf{Return} $d_{i, t}$ and $d_t$; \\
  \bottomrule
\end{tabular}
\vspace{0.05in}
\vspace{0.05in}

\textbf{Algorithm 2} shows the procedure of transportation scheduling in time slot $t$. The positions, remaining energy, energy consuming rates, service states of all PVs, and the service states (pickup or dropoff) of passengers are called initial states, which are recorded in line 1. Line 2 uses the PCI algorithm to serve passengers considering of energy consumptions and energy limits. Line 3 records the number of transportation PVs of all regions. Line 4 calculates the transportation demands $d_{i, t}$ of PVG $i$ and the total transportation demands $d_t$. Line 5 means that all PVs and passengers return to their initial states.

\subsection{Solution to JTCSP}

From Theorem \ref{theo:NNE-VI}, we know that $\mathbf{F}_t$ in VI$(\mathbf{F}_t, \Omega_t)$ is monotonic with respect to its strategy set. So we can use a hyperplane projection method to solve it and the convergence can be guaranteed \cite{facchinei2007GNEP-VI}. Here, we use SSPM (Solodov-Svaiter projection method) \cite{solodov1999projection-method} to solve VI$(\mathbf{F}_t, \Omega_t)$. The projection operator $P_{\Omega_t}$ is defined as
\begin{equation}
P_{\Omega_t}(\mathbf{x}_t) = \arg \min \limits_{\mathbf{x'}_t \in \Omega_t} \| \mathbf{x'}_t - \mathbf{x}_t \|.
\end{equation}
We introduce parameters $\gamma_1 \in (0, 1)$, $\gamma_2 \in (0, 1)$, $\gamma_3 > 1$, $\mu > 0$, and $\eta > 0$. $\mu^{(k)}$ is calculated by
\begin{equation}\label{Eqn:mu}
\mu^{(k)} = \min (\gamma_3 \, \eta^{(k - 1)}, 1).
\end{equation}
The projected residual function is defined as
\begin{equation}\label{Eqn:residual}
\nu(\mathbf{x}^{(k)}_t, \mu^{(k)}) := \mathbf{x}^{(k)}_t - P_{\Omega_t} \big( \mathbf{x}^{(k)}_t - \mu^{(k)} \mathbf{F}_t(\mathbf{x}^{(k)}_t) \big).
\end{equation}
Let $\zeta^{(k)}$ be the smallest nonnegative integer which satisfies
\begin{eqnarray}
&& \langle \mathbf{F}_t(\mathbf{x}^{(k)}_t - \gamma_1^{\zeta^{(k)}} \, \mu^{(k)} \, \nu(\mathbf{x}^{(k)}_t, \mu^{(k)})), \nu(\mathbf{x}^{(k)}_t, \mu^{(k)}) \rangle \nonumber \\
&& ~~~ \geq \frac{\gamma_2}{\mu^{(k)}} \| \nu(\mathbf{x}^{(k)}_t, \mu^{(k)}) \|^2, \label{Eqn:zeta}
\end{eqnarray}
where $\langle \cdot \rangle$ denotes the inner product. $\eta^{(k)}$ and $\mathbf{y}^{(k)}$ are calculated by
\begin{eqnarray}
\eta^{(k)} &=& \gamma_1^{\zeta^{(k)}} \, \mu^{(k)}, \label{Eqn:eta}\\
\mathbf{y}^{(k)} &=& \mathbf{x}^{(k)}_t - \eta^{(k)} \, \nu(\mathbf{x}^{(k)}_t, \mu^{(k)}). \label{Eqn:y}
\end{eqnarray}
The halfspace $\mathbb{H}^{(k)}$ is defined as
\begin{equation}\label{Eqn:halfspace}
\mathbb{H}^{(k)} := \{ \mathbf{x}^{(k)}_t \in \mathfrak{R}^I | \langle \mathbf{F}_t(\mathbf{y}^{(k)}), \mathbf{x}^{(k)}_t - \mathbf{y}^{(k)} \rangle \,\, \leq 0\} .
\end{equation}

\vspace{0.05in}
\vspace{0.05in}
\vspace{0.05in}
\begin{tabular}[tbp]{lp{0.45\textwidth}}
  \toprule
  \textbf{Algorithm 3}: SSPM   \\
  \toprule
  ~1:~Initialize $\mathbf{x}^{(0)}_t \in \Omega_t$, $\eta^{(-1)} > 0$, $\gamma_1 \in (0, 1)$, $\gamma_2 \in$ \\
  ~~~$(0, 1)$, $\gamma_3 > 1$, $\nu(\mathbf{x}^{(0)}_t, \mu^{(0)}) = 1$, $k = 0$. Set a small \\
  ~~~positive value for a bound $\epsilon$; \\
  ~2:~\textbf{repeat} \\
  ~3:~~~Calculate $\mu^{(k)}$ and $\nu(\mathbf{x}^{(k)}_t, \mu^{(k)})$ using (\ref{Eqn:mu}) and \\
  ~~~~~~(\ref{Eqn:residual}) respectively;\\
  ~4:~~~\textbf{if} $\| \nu(\mathbf{x}^{(k)}_t, \mu^{(k)}) \| < \epsilon$ \\
  ~5:~~~~~\textbf{break}; \\
  ~6:~~~\textbf{else}\\
  ~7:~~~~~Calculate the smallest nonnegative integer $\zeta^{(k)}$  \\
  ~~~~~~~~satisfying (\ref{Eqn:zeta}), and then calculate $\eta^{(k)}$ and $\mathbf{y}^{(k)}$\\
  ~~~~~~~~using (\ref{Eqn:eta}) and (\ref{Eqn:y}) respectively;\\
  ~8:~~~~~Calculate the halfsapce $\mathbb{H}^{(k)}$ using (\ref{Eqn:halfspace}), and \\
  ~~~~~~~~then calculate $\mathbf{x}^{(k + 1)}_t$ using (\ref{Eqn:x-next});\\
  ~9:~~~~~$k \leftarrow k + 1$;\\
  10:~~~\textbf{end if} \\
  11:~\textbf{until} $\| \nu(\mathbf{x}^{(k)}_t, \mu^{(k)}) \| < \epsilon$; \\
  12:~\textbf{return} $\mathbf{x}_t$; \\
  \bottomrule
\end{tabular}
\vspace{0.05in}
\vspace{0.05in}
\vspace{0.05in}

\vspace{0.05in}
\begin{tabular}[tbp]{lp{0.45\textwidth}}
  \toprule
  \textbf{Algorithm 4}: Joint Transportation and Charging~~~~~~~~~\\
  ~~~~~~~~~~~~~~~~~Scheduling (JTCS)\\
  \toprule
  ~1:~Perform charging scheduling one day ahead using \\
  ~~~~\textbf{Algorithm~1}; \\
  ~2:~\textbf{for} $t \in \mathcal{T}$ \\
  ~3:~~~Update each PVG $i$, $\forall i \in \mathcal{I}$; \\
  ~4:~~~Perform transportation scheduling in time slot $t$ \\
  ~~~~~~using \textbf{Algorithm~2}; \\
  ~5:~~~The cloud announces the charging demands and \\
  ~~~~~~transportation demands;\\
  ~6:~~~Calculate the best response strategies $\mathbf{x^*}_t = $\\
  ~~~~~~$\{ x_{1, t}^*, \ldots, x_{I, t}^* \}$ of all PVGs using SSPM \\
  ~~~~~~~(\textbf{Algorithm~3}) given the cloud's strategies;\\
  ~7:~~~\textbf{for} PVG $i \in \mathcal{I}$ \\
  ~8:~~~~~$\phi_{i, t} \leftarrow \lceil m_{i, t} x_{i, t}^* \rceil$;\\
  ~9:~~~~~$\psi_{i, t} \leftarrow m_{i, t} - \phi_{i, t}$;\\
  10:~~~\textbf{end} \\
  11:~\textbf{end} \\
  \bottomrule
\end{tabular}
\vspace{0.05in}
\vspace{0.05in}

Then $\mathbf{x}^{(k + 1)}_t$ is obtained by projecting $\mathbf{x}^{(k)}_t$ onto the intersection of its feasible set $\Omega_t$ and the halfspace $\mathbb{H}^{(k)}$:
\begin{equation}\label{Eqn:x-next}
\mathbf{x}^{(k + 1)}_t = P_{\Omega_t \cap \mathbb{H}^{(k)}} (\mathbf{x}^{(k)}_t).
\end{equation}
The procedure of SSPM \cite{solodov1999projection-method} is shown in \textbf{Algorithm 3}. Line 1 is initialization. Lines 2$\sim$11 constitute an iterative process until the projected residual is less than a preset bound. Lines 3 and 8 imply that only two projections are needed in each iteration.

\textbf{Algorithm 4} shows the overall algorithm for the JTCSP. Line 1 uses \textbf{Algorithm~1} to perform the charging scheduling in one day. The ``for" loop between lines 2$\sim$11 indicates that in each time slot the SSPM method will be executed once. Line 3 means that in each time slot all PVGs will be updated, since the unfully charged PVs in each region may change. Line 6 uses SSPM (\textbf{Algorithm 3}) to get the best response strategies of PVGs. In lines 8 and 9, $\phi_{i, t}$ and $\psi_{i, t}$ are the numbers of transportation and charging PVs respectively of PVG $i$ in time slot $t$. Finally, for PVG $i$, the $\phi_{i, t}$ PVs with the maximum remaining energy are selected to provide transportation services and others to charge batteries.

\subsection{Comparison with Existing Approaches}

There exist two cakes in the JTCSP in PV systems, a transportation cake and a charging cake. The proposed JTCS algorithm considers how to cut the two cakes to allocate transportation and charging resources among PVGs. However, if we only consider cutting the transportation or charging cake, the process of cutting the other cake will be in a disorderly state.

In the optimization problem (\ref{Problem:u_i}) of PVG $i$ in PV systems, we see that the first constraint implies the condition in cutting the charging cake, and the second constraint implies the condition in cutting the transportation cake. The utility model in (\ref{Eqn:u_i}) reflects the transportation and charging utilities. However, if we only consider cutting the transportation cake, the utility model of PVG $i$ is
\begin{equation}
v_i = -(m_{i, t} x_{i, t} - d_{i, t})^2,
\end{equation}
and the corresponding optimization problem is
\begin{eqnarray}
&& \max \limits_{x_{i, t}} ~ v_{i, t}, \label{Problem:v_i}\\
&& \text{s.t.}~\sum_{i \in \mathcal{I}} m_{i, t} \, x_{i, t} \geq d_t,  \\
&& ~~~~~ x_{i, t} \in [0, 1], \forall i \in \mathcal{I}.
\end{eqnarray}
We see that, Problem (\ref{Problem:v_i}) have a different objective function and different constraints compared with Problem (\ref{Problem:u_i}). So if we only consider cutting transportation or charging cake, how to cut the other cake will be neglected.

The proposed JTCS algorithm always converge to the NNE at each timeslot in the JTCSP, since the projection method SSPM in the JTCS algorithm always converges to the optimal solution to VI$(\mathbf{F}_t, \Omega_t)$. In \cite{solodov1999projection-method}, it is shown that, SSPM converges to a solution of the variational inequality problem under the only assumption that its function is continuous and monotonic. In Theorem \ref{theo:NNE-VI}, we know that $\mathbf{F}_t$ is continuous and monotonic, therefore, the solution to VI$(\mathbf{F}_t, \Omega_t)$ is the NNE with identical weights.

\section{Performance Evaluation} \label{Sec:PerformanceEvaluation}

In this section, we first describe the simulation settings, and then present the simulation results based on real data sets.

We compare the JTCS algorithm with a heuristic solution: transportation with greedy charging (TGC), where we first use the PCI algorithm proposed in \cite{zhu2016PublicVehicle} to provide transportation services and then schedule all the other unfully charged PVs to charge batteries until fully charged.

\subsection{Simulation Settings}

\textbf{PV Setting}: We use 500 EVs, Yutong E7 \cite{Yutong_E7_Homepage} (China), to study the transportation and charging patterns of PVs, although they are not self-driving vehicles now. The number of seats of Yutong E7 is 10$\sim$30, and here we assume that each PV has 16 seats. Each PV has electricity capacity of 45 kwh, and the maximum travel distance with fully charged battery is 150 km, i.e., it consumes 0.3 kwh each km. Under 220V voltage, the batteries can be fully charged within 8 hours, i.e., the charged energy one hour is 5.625 kwh. We assume that the initial remaining energy of all PVs follows a uniform distribution over [32, 41] kwh. Assuming that all PVs travel along the shortest path between any two positions (origins or destinations of requests) in PCI algorithm with the identical speed 30 km/h.

\begin{figure}
  \centering
  \includegraphics[height=0.25\linewidth,width=0.98\linewidth]{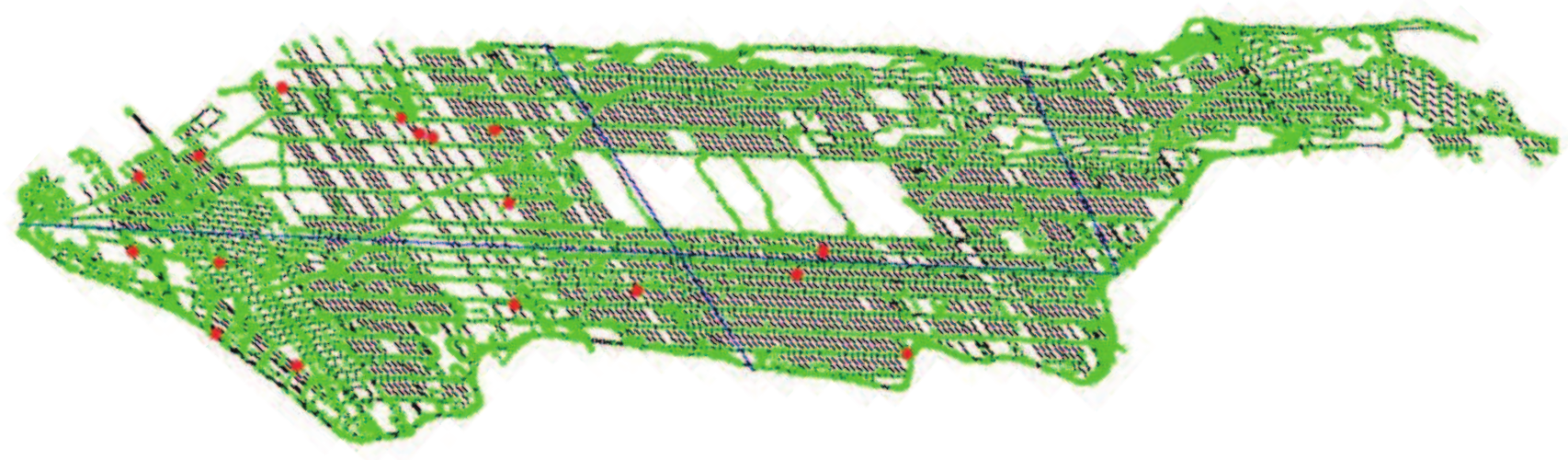}
  \caption{Manhattan in New York City.}
  \label{Fig:Manhattan}
\end{figure}
\begin{figure*}[htbp]
  \begin{minipage}[b]{0.33\linewidth}
    \centering
    \includegraphics[height=0.80\linewidth,width=0.99\linewidth]{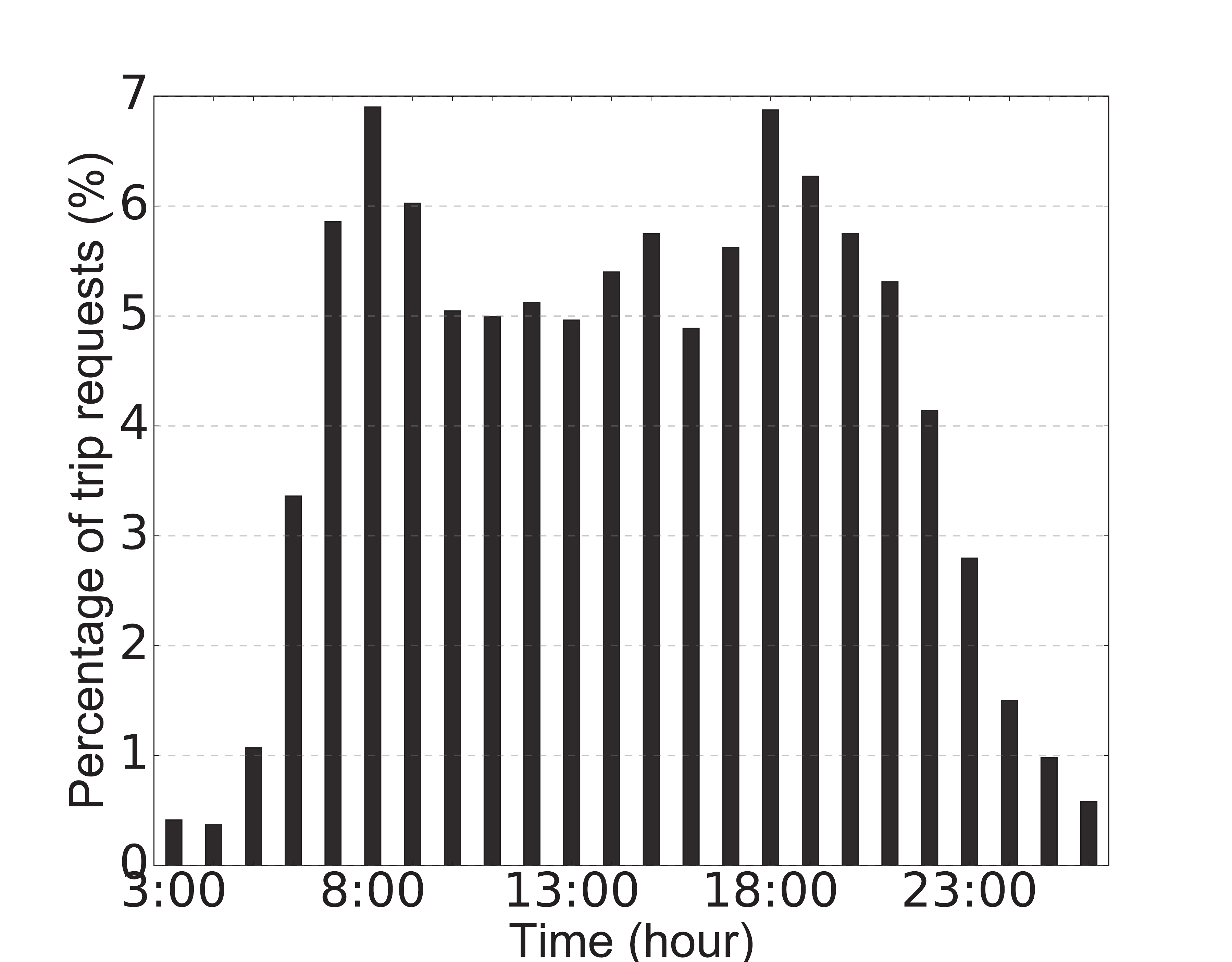}
    \caption{Distribution of trip requests in Manhattan.}
    \label{Fig:DistributionOfRrequests}
  \end{minipage}  \hspace{0.02in}
  \begin{minipage}[b]{0.33\linewidth}
    \centering
    \includegraphics[height=0.80\linewidth,width=0.99\linewidth]{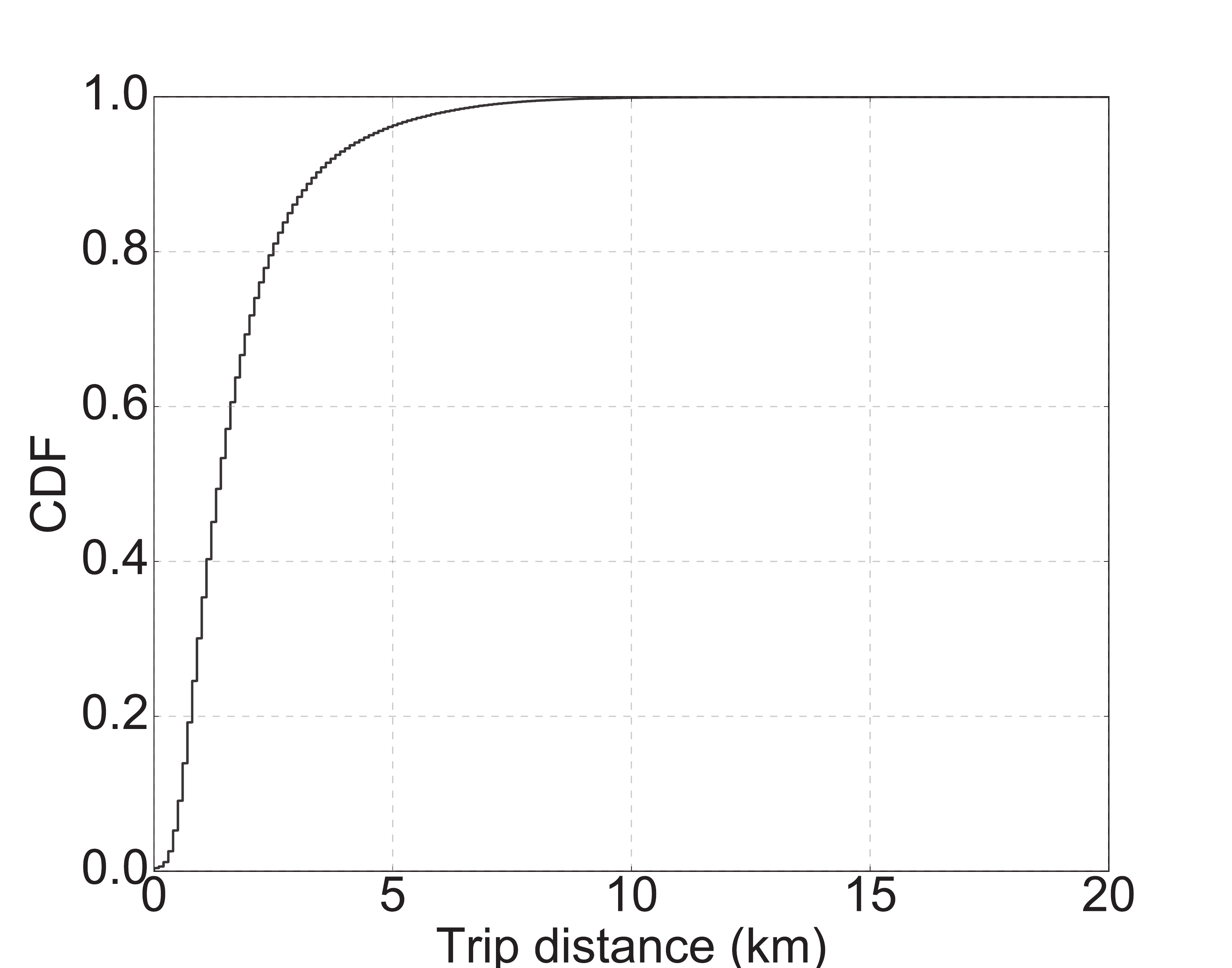}
    \caption{CDF of trip distance in Manhattan.}
    \label{Fig:CDF_MahattanTaxiDataDistTrip}
  \end{minipage}
  \begin{minipage}[b]{0.33\linewidth}
    \centering
    \includegraphics[height=0.80\linewidth,width=0.99\linewidth]{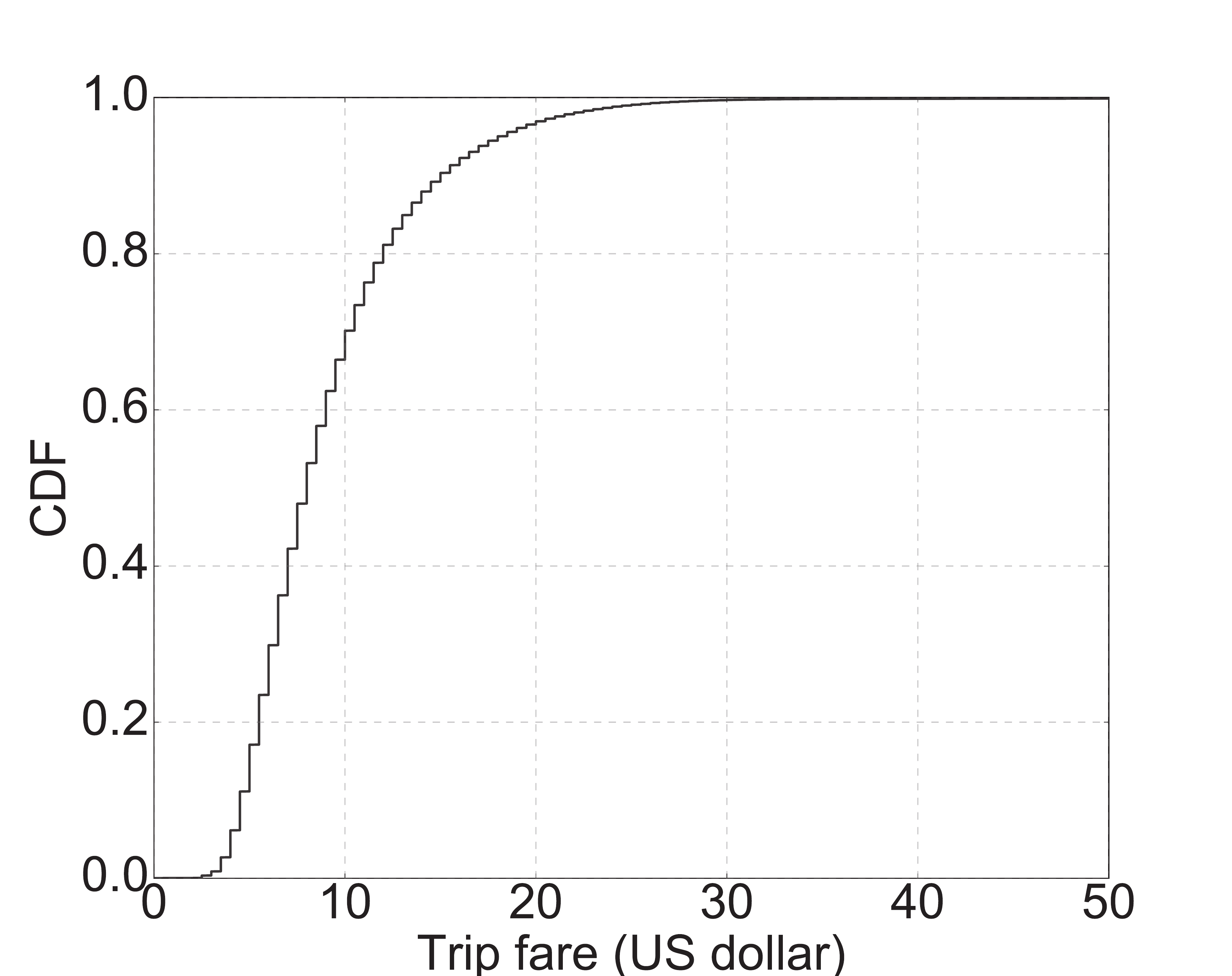}
    \caption{CDF of trip fares in Manhattan.}
    \label{Fig:CDF_MahattanTaxiDataFare}
  \end{minipage}
\end{figure*}

\textbf{Road Map Data}: Fig.~\ref{Fig:Manhattan} shows the map of Manhattan with about 60 km$^2$ in New York City, where black lines are roads, green points are nodes, and red points are charging stations, and the blue lines divide the map into five regions. The road map is extracted through the openstreetmap \cite{Openstreetmap}, and six types of ways are selected: primary, secondary, tertiary, motorway, motorway\_link, and residential, and others such as trunk, unclassified, are ignored. Finally, 3,900 ways and 29,792 nodes are filtered. The longitudes and latitudes of charging stations are extracted from the Google map, and then they are moved to the nearest nodes on roads.

\begin{figure*}[htbp]
  \begin{minipage}[b]{0.33\linewidth}
  \centering
  \includegraphics[height=0.80\linewidth,width=0.99\linewidth]{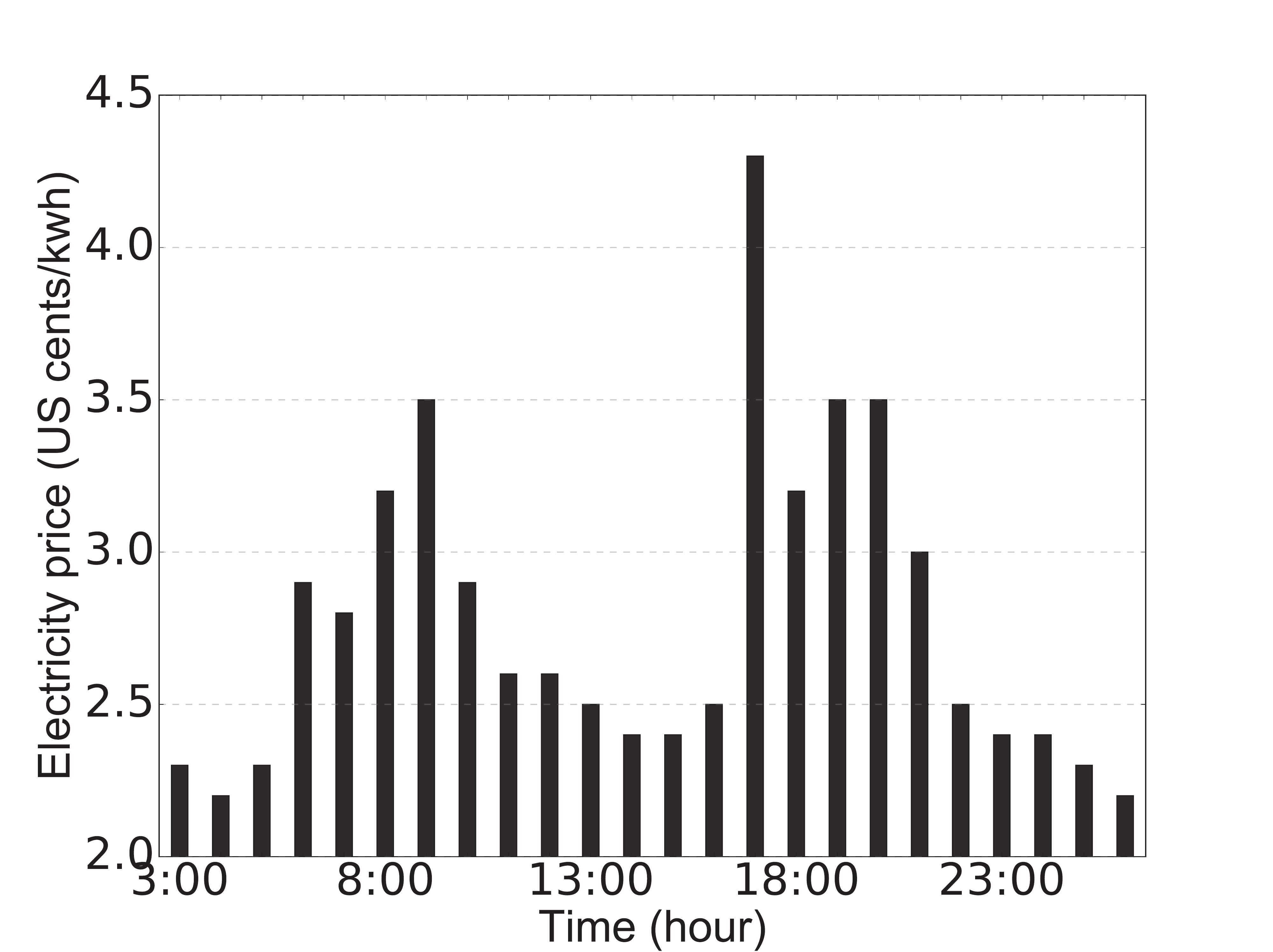}
  \caption{Real-time price (November 27, 2016)}
  \label{Fig:RealTimePrice}
  \end{minipage}  \hspace{0.02in}
  \begin{minipage}[b]{0.33\linewidth}
  \centering
  \includegraphics[height=0.80\linewidth,width=0.99\linewidth]{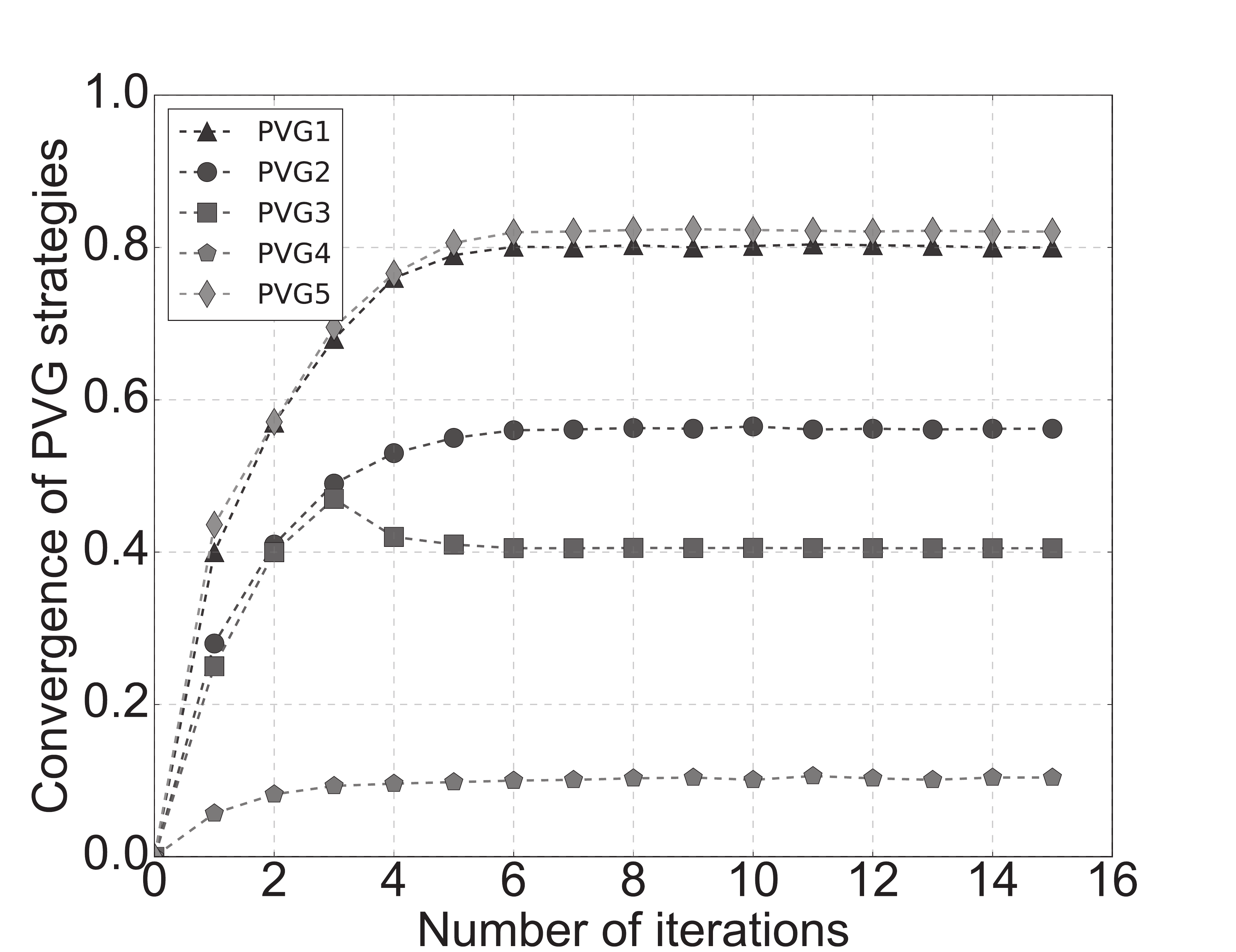}
  \caption{Convergence of PVG strategies.}
  \label{Fig:ConvergencePVG}
  \end{minipage}
  \begin{minipage}[b]{0.33\linewidth}
  \centering
  \includegraphics[height=0.80\linewidth,width=0.99\linewidth]{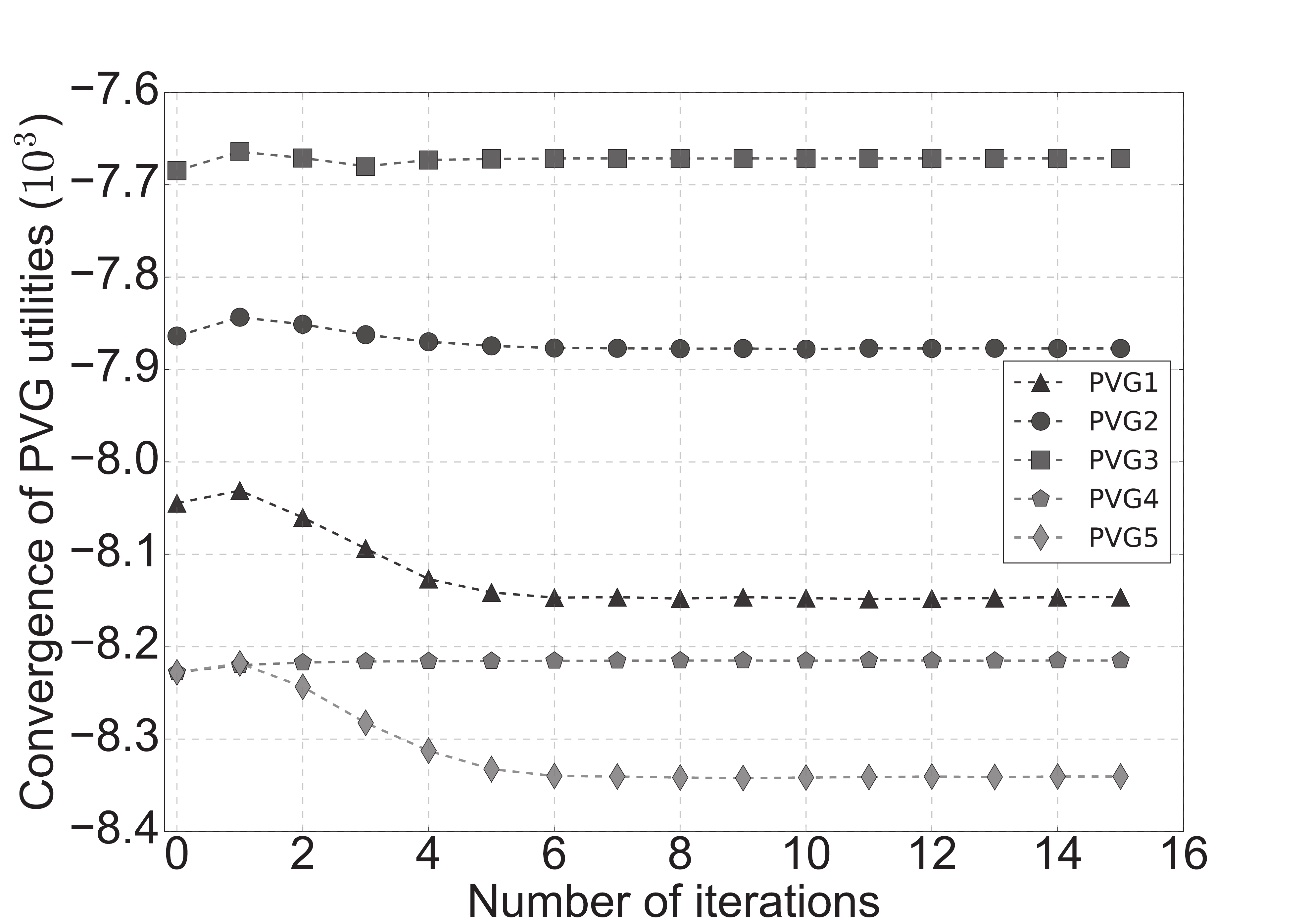}
  \caption{Convergence of PVG utilities.}
  \label{Fig:ConvergenceUtility}
  \end{minipage}
\end{figure*}

\textbf{New York City Taxi Data}: We use the taxi data set (yellow records) of the New York City during January 1$\sim$31, 2016 \cite{NewYorkCityTaxiData}. Each record contains several useful fields for our study, including passenger count, pickup time, dropoff time, trip distance, latitudes/longitudes of origins, latitudes/longitudes of destinations, fares, taxes, tips, and total payment. There are 134,721 requests whose origins and destinations both are in Manhattan on January 5 (Tuesday), 2016. Fig.~\ref{Fig:DistributionOfRrequests} shows the distribution of these trip requests (only in Manhattan) in each hour on January 5, 2016. The time begins at 3:00 for three reasons. 1) This time is one of the most important shift handover time in taxi companies in many cities \cite{tian2016ChargingStationRecommendation}. 2) The number of trip requests of this time is almost the minimum of one day \cite{zhu2016PublicVehicle} \cite{atasoy2015FMOD}. 3) The base loads and energy demands at this time are both almost the minimum in one day \cite{SmartGridDataNewYork} \cite{ma2013EV-charging}. Fig.~\ref{Fig:CDF_MahattanTaxiDataDistTrip} and Fig.~\ref{Fig:CDF_MahattanTaxiDataFare} show the CDF of trip distance and trip fares (tips and taxes are not included) respectively in Mahattan on January 5, 2016. We see that, about 70\% of trip distance is less than 2 km, and 70\% of trip fares are less than 10 US dollars. To make the performance more stable, we only choose the trip requests with travel distance no less than 2 km, and the number of trip requests is reduced to 41,341.

\textbf{Smart Grid Data}: The online data set of RTPs is shown in Fig.~\ref{Fig:RealTimePrice}, which is provided by the Commonwealth Edison Company \cite{RealTimePrice-Commonwealth}. We see that the RTP morning and evening peak time is 9:00 and 17:00 respectively, and the RTP evening peak is much higher than the morning peak.

The parameter settings are summarized in Table~\ref{Tab:ValuesParameters}. $\alpha_1$ and $\alpha_2$ should reflect the weights of satisfaction level of charging and the charging fees, and should be tested to represent the utilities of PVGs. As we have described in the previous section, $\gamma_1 \in (0, 1), \gamma_2 \in (0, 1), \gamma_3 > 1, \mu > 0, \eta > 0$, therefore, we set $\gamma_1 = 0.4$, $\gamma_2 = 0.5$, $\gamma_3 = 1.5$, $\mu = 1$, $\eta = 1$. $e^{\text{min}}$ denotes the minimum energy of any PV to travel to the nearest the charging station, and 3 kwh is generally enough for one PV in Manhattan since it can travel at least 10 km according to our assumptions. We set $\rho = 0.2$, which is generally enough for the energy in the next time slot. $\epsilon = 10^{-3}$ is a widely accepted bound in SSPM.

\begin{table}[tbp]
  \centering
  \caption{Values of parameters in simulation settings}
  \label{Tab:ValuesParameters}
  \begin{tabular}{ c c c c c c    }
  \hline
  $\alpha_1$ & $\alpha_2$ & $\eta$ & $\gamma_1$ & $\gamma_2$ & $\gamma_3$   \\
  20 & 5 & 1 & 0.4 & 0.5 & 1.5     \\
  \hline
  $\mu$ & $e^{\text{min}}$ & $\rho$  & $\epsilon$   &     &  \\
  1 & 3& 0.2 &  $10^{-3}$ &  & \\
  \hline
  \end{tabular}
\end{table}

\subsection{Results}

We present the performance of JTCS and TGC in terms of eight metrics: convergence of PVG strategies, convergence of PVG utilities, number of transportation PVs in each hour, consumed energy in each hour, charged energy in each hour, energy payment in each hour, average energy price and total payment, and remaining energy of all PVs. Finally, the scalability of JTCS and TGC is discussed.

\begin{figure*}[htbp]
  \begin{minipage}[b]{0.33\linewidth}
  \centering
  \includegraphics[height=0.80\linewidth,width=0.99\linewidth]{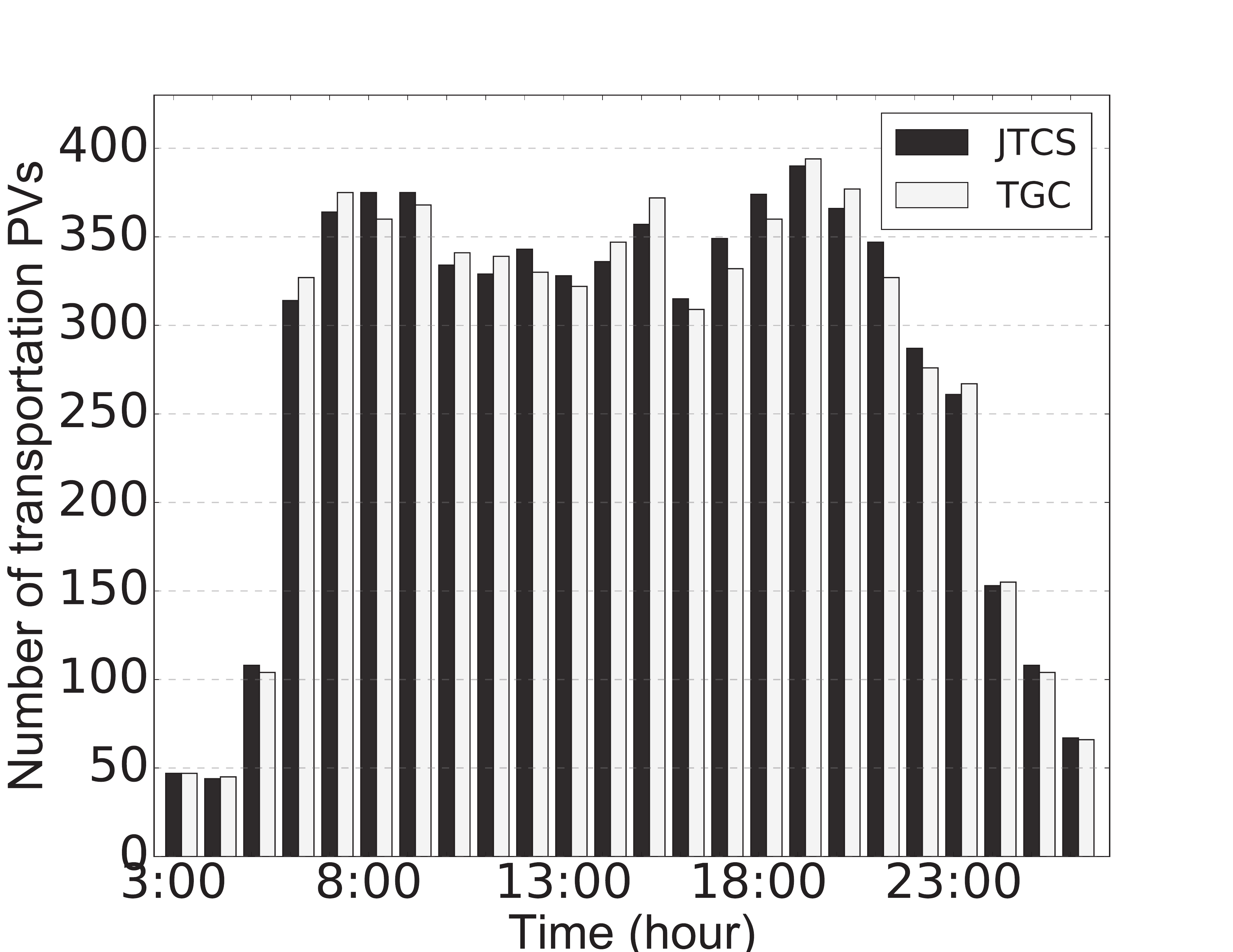}
  \caption{Number of transportation PVs in each hour.}
  \label{Fig:NumTransportationPVsEachHour}
  \end{minipage}
  \begin{minipage}[b]{0.33\linewidth}
  \centering
  \includegraphics[height=0.80\linewidth,width=0.99\linewidth]{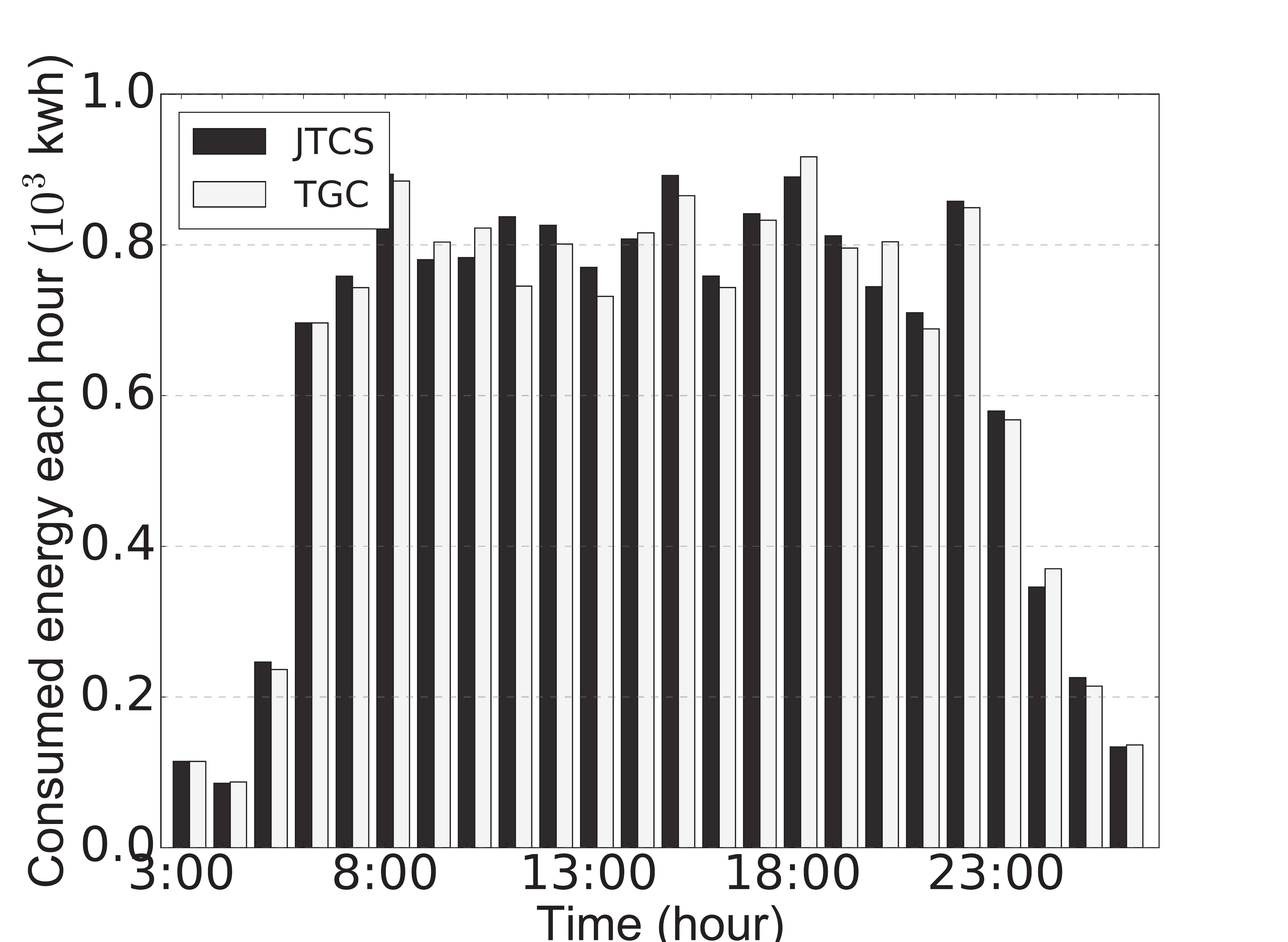}
  \caption{Consumed energy in each hour.}
  \label{Fig:ConsumedEnergyEachHour}
  \end{minipage}  \hspace{0.02in}
  \begin{minipage}[b]{0.33\linewidth}
  \centering
  \includegraphics[height=0.80\linewidth,width=0.99\linewidth]{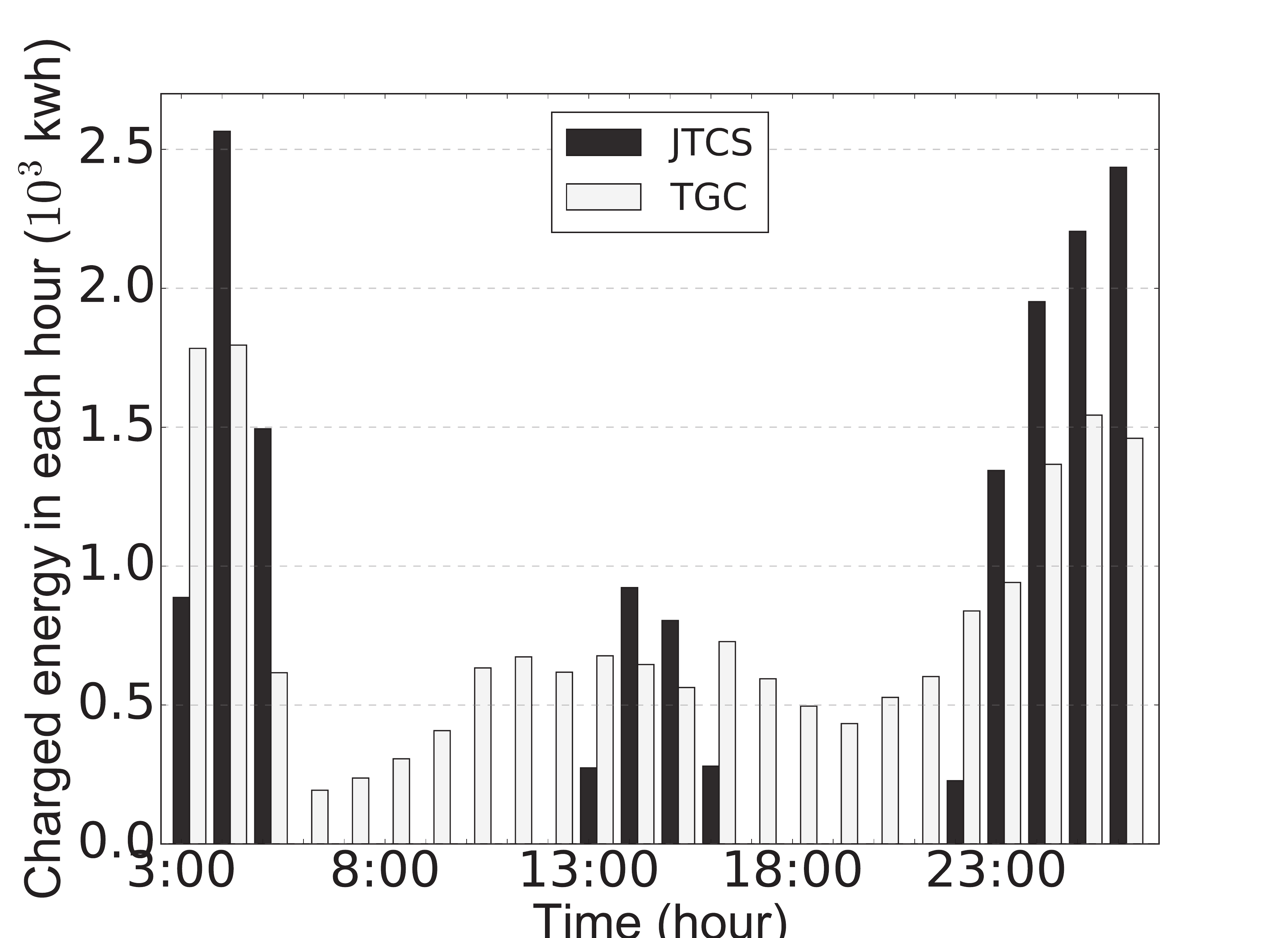}
  \caption{Charged energy in each hour.}
  \label{Fig:ChargedEnergyEachHour}
  \end{minipage}  \hspace{0.02in}
\end{figure*}
\begin{figure*}[htbp]
  \begin{minipage}[b]{0.33\linewidth}
  \centering
  \includegraphics[height=0.80\linewidth,width=0.99\linewidth]{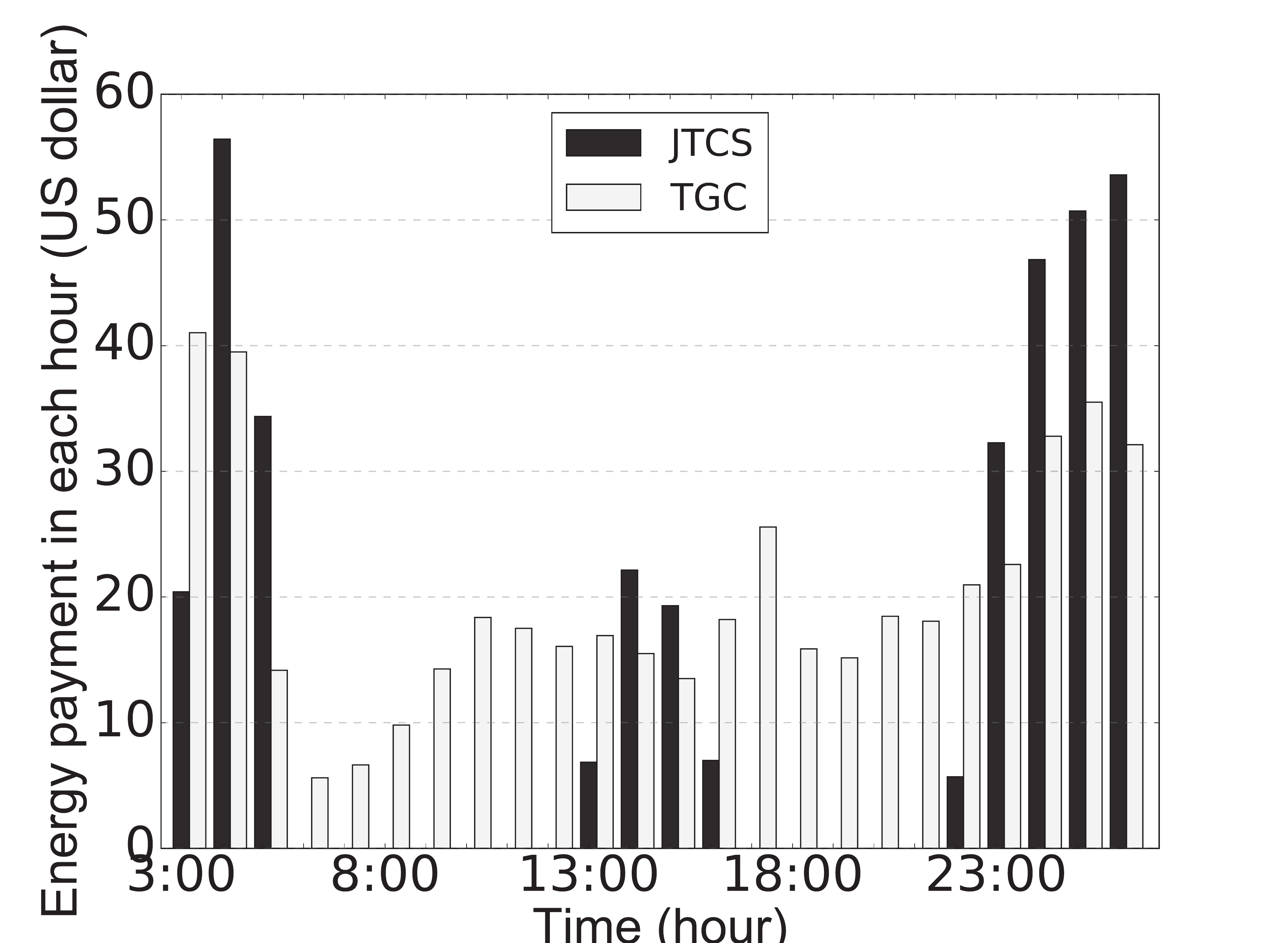}
  \caption{Energy payment in each hour.}
  \label{Fig:EnergyPaymentEachHour}
  \end{minipage}
  \begin{minipage}[b]{0.33\linewidth}
  \centering
  \includegraphics[height=0.80\linewidth,width=0.99\linewidth]{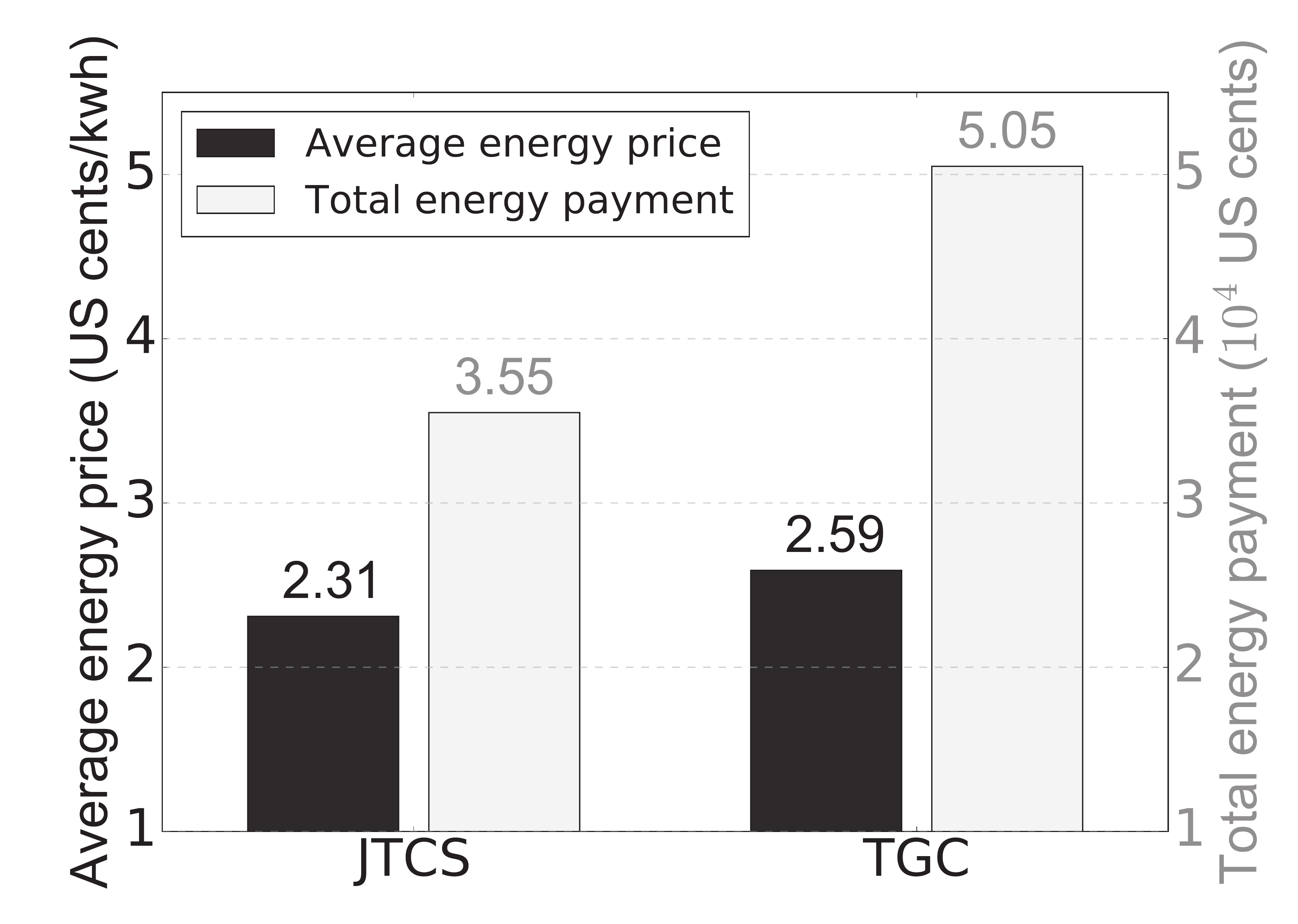}
  \caption{Average energy price and total payment.}
  \label{Fig:AverageEnergyPriceTotalPayment}
  \end{minipage}
  \begin{minipage}[b]{0.33\linewidth}
  \centering
  \includegraphics[height=0.80\linewidth,width=0.99\linewidth]{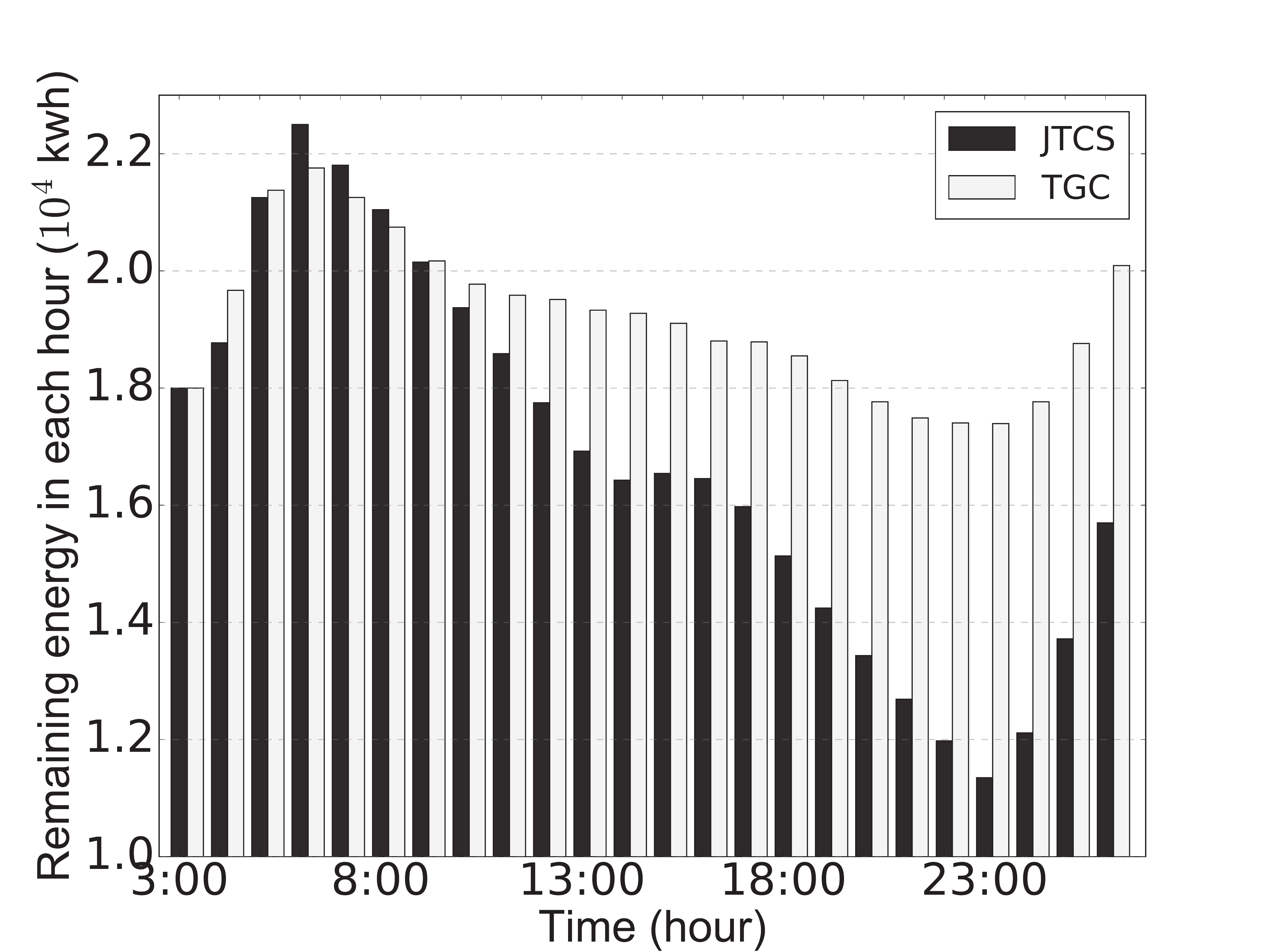}
  \caption{Remaining energy of all PVs.}
  \label{Fig:RemainingEnergyEachHour}
  \end{minipage}
\end{figure*}

The \emph{first} metric is the convergence of PVG strategies at the time 3:00, which is shown by Fig.~\ref{Fig:ConvergencePVG}. We see that, after seven iterations, the proposed JTCS algorithm with five PVGs converges to the NNE. Different PVGs have different best response strategies according to their transportation and charging demands.

The \emph{second} metric is the convergence of PVG utilities at the time 3:00, which is presented by Fig.~\ref{Fig:ConvergenceUtility}. We see that, all the utility values are negative since two items in the utility function (\ref{Eqn:u_i}) are negative and only one item is positive, and the utility values converge to the NNE with the converges of strategies.

If the number of PVs is set to 500, JTCS and TGC almost have the same transportation performance in terms of the number of transportation PVs in each hour, and the average trip time. In JTCS and TGC, the average trip time of passengers is about 14.6 minutes with the waiting time of 6.4 minutes and the travel time (from pickup time to dropoff time) of 8.2 minutes.

The \emph{third} metric is the number of transportation PVs in each hour in JTCS and TGC, which is depicted by Fig.~\ref{Fig:NumTransportationPVsEachHour}. We see that, the two schemes almost have the same performances, e.g., in transportation peak time (e.g., 8:00), about 400 PVs should serve passengers, i.e., at most 100 PVs can charge batteries. Generally, the number of PVs with sufficient energy in JTCS is smaller than that in TGC since all PVs except the transportation ones charge batteries in TGC, however, JTCS has almost the same transportation performance as TGC since it can coordinate the PVs in each region to satisfy the transportation demands of passengers, which can be seen from the first term in (\ref{Eqn:u_i}).

The \emph{fourth} metric is the consumed energy in each hour, which is shown by Fig.~\ref{Fig:ConsumedEnergyEachHour}. We see that, the energy consumptions in JTCS are similar to that in TGC, since the cloud in JTCS can coordinate the transportation demands of PVs in each region of the city, and considers about the energy consumptions in the next time slot, which is shown in the second constraint in Problem (\ref{Problem:PredictChargedEnergy}).

The \emph{fifth} metric is the charged energy in each hour, which is presented by Fig.~\ref{Fig:ChargedEnergyEachHour}. We see that, the charged energy during 3:00-3:59, 6:00-13:59 and 16:00-22:59 in TGC is obviously higher than that in JTCS. However, during 23:00-23:59, 1:00-2:59, and 4:00-5:59, the charged energy of JTCS is higher than that in TGC. We conclude that JTCS decides to charge more energy at low RTPs, however, to ensure the transportation performances of the next time slots, it decides to charge some energy even at high RTPs.

The \emph{sixth} metric is the energy payment in each hour using JTCS and TGC, which is depicted by Fig.~\ref{Fig:EnergyPaymentEachHour}. We see that, the profile of energy payment in each hour is similar to charged energy shown in Fig.~\ref{Fig:ChargedEnergyEachHour}. During night and noon time, the energy payment in JTCS is high, while at other time the energy payment may decrease to zero. The charged energy of JTCS and TGC is 15,392 kwh and 19,315 kwh respectively. We see that, PVs in JTCS charge less energy than that in TGC while this does not affect the transportation services.

The \emph{seventh} metric is the average energy price (the average energy payment per kwh, US cents/kwh) and total payment (US cents) using JTCS and TGC, which are together shown by Fig.~\ref{Fig:AverageEnergyPriceTotalPayment} with the same X-axis. We see that, the average energy price in the JTCS algorithm is 10.86\% less than TGC, and the total energy payment of JTCS and TGC is 355 and 505 US dollars respectively, reduced by 29.8\%. From this point of view, the average energy price in JTCS is reduced without reducing the transportation service quality since it considers both the transportation demands and charging demands of PVs. However, TGC only considers transportation with a greedy charging strategy, therefore, the charging costs are much higher than that in JTCS.

The \emph{eighth} metric is the remaining energy of all PVs in each hour, which is depicted by Fig.~\ref{Fig:RemainingEnergyEachHour}. We see that, the remaining energy in JTCS is generally less than that in TGC. The remaining energy in JTCS reduces quickly in transportation morning and evening peak time, e.g., 8:00 and 18:00, however, the remaining energy in GTC reduces very slowly, since more PVs decide to charge batteries even the RTPs are in high levels, which can reflect the effects of our scheme.


Finally, we discuss the scalability of JTCS over multiple days. In JTCS, we assume that the final remaining energy is not less than the initial energy in the day. To ensure the validness of JTCS over multiple days, we should guarantee the final energy of each day is not less than the consumed energy of the next day. For example, if we expect the final remaining energy of one day be more, we can revise the third constraint of (\ref{Problem:PredictChargedEnergy}) in \textbf{Algorithm 1} by increasing the right-hand bound, since the higher is the right-hand bound, the more final remaining energy PVs will have at the end of the day, and vice versa.

Our proposed approach can be easily extended to the case when the exact trip requests one day ahead are not known. In particular, the cloud can use the past requests of a similar day to predict the charging demands, e.g., on Monday, the cloud can use the requests of the previous Monday or the average of several Mondays. If the actual number of requests is significantly different from the past, the cloud can adjust the value of some parameters to adapt to the new scenarios. For example, if the number of requests becomes much larger than that of the past similar day, the cloud can increase the charging demands $E^{\text{+}}_t$ to ensure sufficient future energy.

\section{Conclusions} \label{Sec:Conclusion}

PV systems are new transportation systems for future smart cities, where PVs are typically self-driving electric vehicles. Transportation and charging coexist in PV systems, and to balance transportation and charging demands, we use a cake cutting game to capture the features of the PVGs and build utility models. Then we analyze the existence and uniqueness of Nash equilibrium in this game. Moreover, we propose the JTCS algorithm to achieve the unique normalized Nash equilibrium. Finally, we perform simulations based on the taxi trip data and smart grid data of New York to evaluate its performance. We find that, JTCS can provide almost the same transportation services as TGC, however, the average energy price is reduced by 10.86\% compared with TGC.

There are several future works on this research. As previously mentioned, the proposed JTCS algorithm assumes that the trip requests and the real-time price are known in advance, however, in the real world, it is impossible. Therefore, the real-time trip requests and real-time charging prices in uncertain traffic and smart grid settings should be considered in the future work. Obviously, new methods should be based on the predication of trip demands of passengers and the real-time electricity price. We only consider transportation and charging in PV systems, however, in fact, a small part of PVs may be willing to discharge batteries if the transportation demands are small in their regions. We will consider the interactions between transportation, charging, and discharging in the future works.



\begin{thebibliography}{10}

\bibitem{zhu2016PublicVehicle}
M.~Zhu, X.-Y. Liu, F.~Tang, M.~Qiu, and M.-Y. Wu, ``Public vehicles for future
  urban transportation,'' {\em IEEE Transactions on Intelligent Transportation
  Systems (TITS)}, vol.~17, no.~12, pp.~3344--3353, 2016.

\bibitem{zhu2016transfer}
M.~Zhu, X.-Y. Liu, M.~Qiu, and M.-Y. Wu, ``Transfer problem in a cloud-based
  public vehicle system with sustainable discomfort,'' {\em Springer Mobile
  Networks and Applications (MONET)}, vol.~21, no.~5, pp.~890--900, 2016.

\bibitem{Zhu2016TrafficBigData}
M.~Zhu, X.-Y. Liu, M.~Qiu, and M.-Y. Wu, ``Traffic big data based path planning
  strategy in public vehicle systems,'' in {\em IEEE/ACM International
  Symposium on Quality of Service (IWQoS)}, 2016.

\bibitem{tao2017driverless-car}
Z.~Tao, P.~Bonnifait, V.~Fr{\'e}mont, J.~I. Gusman, and S.~Bonnet,
  ``Road-centred map-aided localization for driverless cars using
  single-frequency gnss receivers,'' {\em Journal of Field Robotics}, 2017.

\bibitem{yan2013survey-smart-grid}
Y.~Yan, Y.~Qian, H.~Sharif, and D.~Tipper, ``A survey on smart grid
  communication infrastructures: Motivations, requirements and challenges,''
  {\em IEEE Communications Surveys \& Tutorials}, vol.~15, no.~1, pp.~5--20,
  2013.

\bibitem{tan2016gameV2G}
J.~Tan and L.~Wang, ``A game-theoretic framework for vehicle-to-grid frequency
  regulation considering smart charging mechanism,'' {\em IEEE Transactions on
  Smart Grid (TSG)}, 2016.

\bibitem{santi2014VehiclePooling}
P.~Santi, G.~Resta, M.~Szell, S.~Sobolevsky, S.~H. Strogatz, and C.~Ratti,
  ``Quantifying the benefits of vehicle pooling with shareability networks,''
  {\em Proceedings of the National Academy of Sciences}, vol.~111, no.~37,
  pp.~13290--13294, 2014.

\bibitem{chen2016Traffic-Demand-Forecast}
X.~Chen, L.~Peng, M.~Zhang, and W.~Li, ``A public traffic demand forecast
  method based on computational experiments,'' {\em IEEE Transactions on
  Intelligent Transportation Systems (TITS)}, vol.~18, no.~4, pp.~984--995,
  2017.

\bibitem{morency2015-CarParkingTime-95-percent}
C.~Morency, H.~Verreault, and M.~Demers, ``Identification of the minimum size
  of the shared-car fleet required to satisfy car-driving trips in montreal,''
  {\em Springer Transportation}, vol.~42, no.~3, pp.~435--447, 2015.

\bibitem{chen2010cake-cutting}
Y.~Chen, J.~Lai, D.~Parkes, and A.~D. Procaccia, ``Truth, justice, and cake
  cutting,'' in {\em AAAI Conference on Artificial Intelligence}, 2010.

\bibitem{zhu2015PublicVehicle}
M.~Zhu, L.~Kong, X.-Y. Liu, and M.-Y. Wu, ``A public vehicle system with
  multiple origin-destination pairs on traffic networks,'' in {\em IEEE Global
  Communications Conference (GLOBECOM)}, pp.~1--6, 2015.

\bibitem{zhu2013lane-change}
M.~Zhu, J.~Hu, L.~Kong, W.~Shu, and M.-Y. Wu, ``An algorithm of lane change
  using two-lane nasch model in traffic networks,'' in {\em IEEE International
  Conference on Connected Vehicles and Expo (ICCVE)}, pp.~241--246, 2013.

\bibitem{zhu2015RidesharingComfort}
M.~Zhu and M.-Y. Wu, ``Traffic efficiency improvement and passengers comfort in
  ridesharing systems in vanets,'' in {\em IEEE International Conference on
  Connected Vehicles and Expo (ICCVE)}, pp.~116--121, 2015.

\bibitem{atasoy2015FMOD}
B.~Atasoy, T.~Ikeda, X.~Song, and M.~E. Ben-Akiva, ``The concept and impact
  analysis of a flexible mobility on demand system,'' {\em Elsevier
  Transportation Research Part C: Emerging Technologies}, vol.~56,
  pp.~373--392, 2015.

\bibitem{herbawi2012ridematching}
W.~Herbawi and M.~Weber, ``The ridematching problem with time windows in
  dynamic ridesharing: A model and a genetic algorithm,'' in {\em IEEE Congress
  on Evolutionary Computation (CEC)}, pp.~1--8, 2012.

\bibitem{fagnant2014shared-autonomous-vehicle}
D.~J. Fagnant and K.~M. Kockelman, ``The travel and environmental implications
  of shared autonomous vehicles, using agent-based model scenarios,'' {\em
  Elsevier Transportation Research Part C: Emerging Technologies}, vol.~40,
  pp.~1--13, 2014.

\bibitem{jung2016Shared-Taxi-Dispatch}
J.~Jung, R.~Jayakrishnan, and J.~Y. Park, ``Dynamic shared-taxi dispatch
  algorithm with hybrid-simulated annealing,'' {\em Wiley Online Library
  Computer-Aided Civil and Infrastructure Engineering}, vol.~31, no.~4,
  pp.~275--291, 2016.

\bibitem{rivera2016EV-aggregator}
J.~Rivera, C.~Goebel, and H.-A. Jacobsen, ``Distributed convex optimization for
  electric vehicle aggregators,'' {\em IEEE Transactions on Smart Grid (TSG)},
  2016.

\bibitem{franco2014milp-ev-charging}
J.~F. Franco, M.~J. Rider, and R.~Romero, ``An milp model for the plug-in
  electric vehicle charging coordination problem in electrical distribution
  systems,'' in {\em IEEE PES General Meeting| Conference \& Exposition},
  pp.~1--5, 2014.

\bibitem{zhu2014charging-scheduling}
M.~Zhu, X.-Y. Liu, L.~Kong, and M.-Y. Wu, ``The charging-scheduling problem for
  electric vehicle networks,'' in {\em IEEE Wireless Communications and
  Networking Conference (WCNC)}, pp.~3178--3183, 2014.

\bibitem{zhang2014EV-charging-scheduling}
T.~Zhang, W.~Chen, Z.~Han, and Z.~Cao, ``Charging scheduling of electric
  vehicles with local renewable energy under uncertain electric vehicle arrival
  and grid power price,'' {\em IEEE Transactions on Vehicular Technology
  (TVT)}, vol.~63, no.~6, pp.~2600--2612, 2014.

\bibitem{de2016EV-routing}
M.~M. de~Weerdt, S.~Stein, E.~H. Gerding, V.~Robu, and N.~R. Jennings,
  ``Intention-aware routing of electric vehicles,'' {\em IEEE Transactions on
  Intelligent Transportation Systems (TITS)}, vol.~17, no.~5, pp.~1472--1482,
  2016.

\bibitem{groot2015reverse-stackelberg}
N.~Groot, B.~De~Schutter, and H.~Hellendoorn, ``Toward system-optimal routing
  in traffic networks: A reverse stackelberg game approach,'' {\em IEEE
  Transactions on Intelligent Transportation Systems (TITS)}, vol.~16, no.~1,
  pp.~29--40, 2015.

\bibitem{elhenawy2015game-traffic}
M.~Elhenawy, A.~A. Elbery, A.~A. Hassan, and H.~A. Rakha, ``An intersection
  game-theory-based traffic control algorithm in a connected vehicle
  environment,'' in {\em IEEE International Conference on Intelligent
  Transportation Systems (ITSC)}, pp.~343--347, 2015.

\bibitem{bui2017game-traffic-light}
K.-H.~N. Bui, J.~E. Jung, and D.~Camacho, ``Game theoretic approach on
  real-time decision making for iot-based traffic light control,'' {\em Wiley
  Online Library Concurrency and Computation: Practice and Experience}, 2017.

\bibitem{farokhi2015truck-game}
F.~Farokhi and K.~H. Johansson, ``A study of truck platooning incentives using
  a congestion game,'' {\em IEEE Transactions on Intelligent Transportation
  Systems (TITS)}, vol.~16, no.~2, pp.~581--595, 2015.

\bibitem{drwal2017PricingMechanism}
M.~Drwal, E.~Gerding, S.~Stein, K.~Hayakawa, and H.~Kitaoka, ``Adaptive pricing
  mechanisms for on-demand mobility,'' in {\em Proceedings of the Conference on
  Autonomous Agents and MultiAgent Systems (AAMAS)}, pp.~1017--1025, 2017.

\bibitem{rigas2015EV-scheduling}
E.~S. Rigas, S.~D. Ramchurn, and N.~Bassiliades, ``Algorithms for electric
  vehicle scheduling in mobility-on-demand schemes,'' in {\em IEEE
  International Conference on Intelligent Transportation Systems (ITSC)},
  pp.~1339--1344, 2015.

\bibitem{NewYorkCityTaxiData}
``New york city taxi and limousine commission (tlc) trip data.'' \\
  \url{http://www.nyc.gov/html/tlc/html/about/trip_record_data.shtml}, 2017.

\bibitem{SmartGridDataNewYork}
``Power grid data in new york.'' \\
  \url{http://www.nyiso.com/public/markets_operations/market_data/graphs/index.jsp},
  2017.

\bibitem{Yang2014EV-charging}
J.~Yang, L.~He, and S.~Fu, ``An improved pso-based charging strategy of
  electric vehicles in electrical distribution grid,'' {\em Applied Energy},
  vol.~128, no.~3, pp.~82--92, 2014.

\bibitem{yilmaz2013impact-V2G}
M.~Yilmaz and P.~T. Krein, ``Review of the impact of vehicle-to-grid
  technologies on distribution systems and utility interfaces,'' {\em IEEE
  Transactions on Power Electronics (TPE)}, vol.~28, no.~12, pp.~5673--5689,
  2013.

\bibitem{giraldo2016RTP}
J.~Giraldo, A.~C{\'a}rdenas, and N.~Quijano, ``Integrity attacks on real-time
  pricing in smart grids: impact and countermeasures,'' {\em IEEE Transactions
  on Smart Grid (TSG)}, 2016.

\bibitem{wang2009optimizing-energy-use}
C.~Wang, M.~de~Groot, and P.~Marendy, ``A service-oriented system for
  optimizing residential energy use,'' in {\em IEEE International Conference on
  Web Services (ICWS)}, pp.~735--742, 2009.

\bibitem{lee2015energy-game}
J.~Lee, J.~Guo, J.~K. Choi, and M.~Zukerman, ``Distributed energy trading in
  microgrids: A game-theoretic model and its equilibrium analysis,'' {\em IEEE
  Transactions on Industrial Electronics (TIE)}, vol.~62, no.~6,
  pp.~3524--3533, 2015.

\bibitem{liu2017energy-Stackelberg}
N.~Liu, X.~Yu, C.~Wang, and J.~Wang, ``Energy sharing management for microgrids
  with pv prosumers: A stackelberg game approach,'' {\em IEEE Transactions on
  Industrial Informatics (TII)}, 2017.

\bibitem{park2016contribution-energy-game}
S.~Park, J.~Lee, S.~Bae, G.~Hwang, and J.~K. Choi, ``Contribution-based
  energy-trading mechanism in microgrids for future smart grid: A game
  theoretic approach,'' {\em IEEE Transactions on Industrial Electronics
  (TIE)}, vol.~63, no.~7, pp.~4255--4265, 2016.

\bibitem{williams2013self-driving-car-1GB-per-second}
W.~Williams, ``Google's self-driving cars gather nearly 1gb of sensor data
  every second-would you trust them,'' {\em Betanew. com}, 2013.

\bibitem{zhu2017PathPlanning}
M.~Zhu, X.-Y. Liu, and X.~Wang, ``An online ride-sharing path planning strategy
  for public vehicle systems,'' {\em (under review) IEEE Transactions on
  Intelligent Transportation Systems (TITS)}, 2017.

\bibitem{jiang2016vehicleSpeedPrediction}
B.~Jiang and Y.~Fei, ``Vehicle speed prediction by two-level data driven models
  in vehicular networks,'' {\em IEEE Transactions on Intelligent Transportation
  Systems (TITS)}, 2016.

\bibitem{myerson2013game-theory}
R.~B. Myerson, {\em Game theory}.
\newblock Harvard university press, 2013.

\bibitem{branzei2013cake-cutting}
S.~Br{\^a}nzei and P.~B. Miltersen, ``Equilibrium analysis in cake cutting,''
  in {\em International Conference on Autonomous Agents and Multi-agent
  Systems}, pp.~327--334, 2013.

\bibitem{kulkarni2014shared-constraint-game}
A.~A. Kulkarni and U.~V. Shanbhag, ``A shared-constraint approach to
  multi-leader multi-follower games,'' {\em Springer Set-Valued and Variational
  Analysis}, vol.~22, no.~4, pp.~691--720, 2014.

\bibitem{rosen1965existence-uniqueness-equilibrium}
J.~B. Rosen, ``Existence and uniqueness of equilibrium points for concave
  n-person games,'' {\em JSTOR Econometrica: Journal of the Econometric
  Society}, pp.~520--534, 1965.

\bibitem{wu2007kkt}
H.-C. Wu, ``The karush--kuhn--tucker optimality conditions in an optimization
  problem with interval-valued objective function,'' {\em Elsevier European
  Journal of Operational Research}, vol.~176, no.~1, pp.~46--59, 2007.

\bibitem{facchinei2007GNEP}
F.~Facchinei and C.~Kanzow, ``Generalized nash equilibrium problems,'' {\em
  Springer 4OR: A Quarterly Journal of Operations Research}, vol.~5, no.~3,
  pp.~173--210, 2007.

\bibitem{facchinei2007GNEP-VI}
F.~Facchinei, A.~Fischer, and V.~Piccialli, ``On generalized nash games and
  variational inequalities,'' {\em Elsevier Operations Research Letters},
  vol.~35, no.~2, pp.~159--164, 2007.

\bibitem{solodov1999projection-method}
M.~V. Solodov and B.~F. Svaiter, ``A new projection method for variational
  inequality problems,'' {\em SIAM Journal on Control and Optimization},
  vol.~37, no.~3, pp.~765--776, 1999.

\bibitem{Yutong_E7_Homepage}
``Homepage of yutong e7 electric vehicle.'' \\
  \url{http://www.yutong.com/products/E7.shtml}, 2017.

\bibitem{Openstreetmap}
``Homepage of openstreetmap.'' \\ \url{http://www.openstreetmap.org/}, 2017.

\bibitem{tian2016ChargingStationRecommendation}
Z.~Tian, T.~Jung, Y.~Wang, F.~Zhang, L.~Tu, C.~Xu, C.~Tian, and X.-Y. Li,
  ``Real-time charging station recommendation system for electric-vehicle
  taxis,'' {\em IEEE Transactions on Intelligent Transportation Systems
  (TITS)}, vol.~17, no.~11, pp.~3098--3109, 2016.

\bibitem{ma2013EV-charging}
Z.~Ma, D.~S. Callaway, and I.~A. Hiskens, ``Decentralized charging control of
  large populations of plug-in electric vehicles,'' {\em IEEE Transactions on
  Control Systems Technology (TCST)}, vol.~21, no.~1, pp.~67--78, 2013.

\bibitem{RealTimePrice-Commonwealth}
``Real time electricity prices.'' \\
  \url{https://hourlypricing.comed.com/live-prices/}, 2017.

\end{thebibliography}

\vspace{-0.20in}
\begin{IEEEbiography}[{\includegraphics[width=1in,height=1.21in,clip,keepaspectratio]{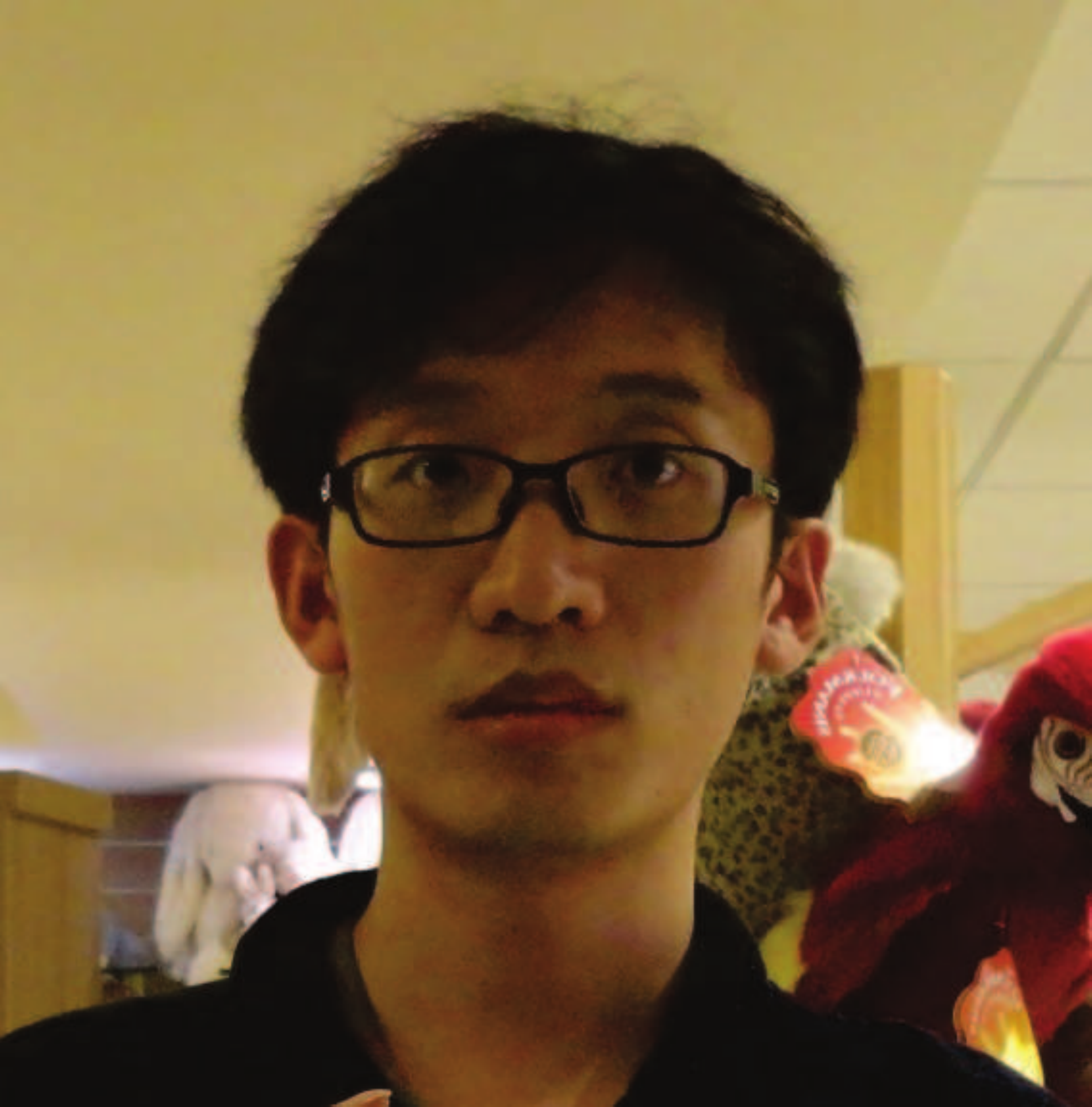}}]\\~~{\textbf{Ming~Zhu}} is now a postdoctoral researcher and assistant researcher in Shenzhen Institutes of Advanced Technology, Chinese Academy of Sciences, Shenzhen, China. He received the PhD degree in Computer Science and Engineering in Shanghai Jiao Tong University, Shanghai, China. A part of this work is finished in Shanghai Jiao Tong University.

His research interests are in the area of big data, artificial intelligence, internet of things, and wireless communications.
\end{IEEEbiography}

\vspace{-0.50in}
\begin{IEEEbiography}[{\includegraphics[width=1.40in,height=1.07in,clip,keepaspectratio]{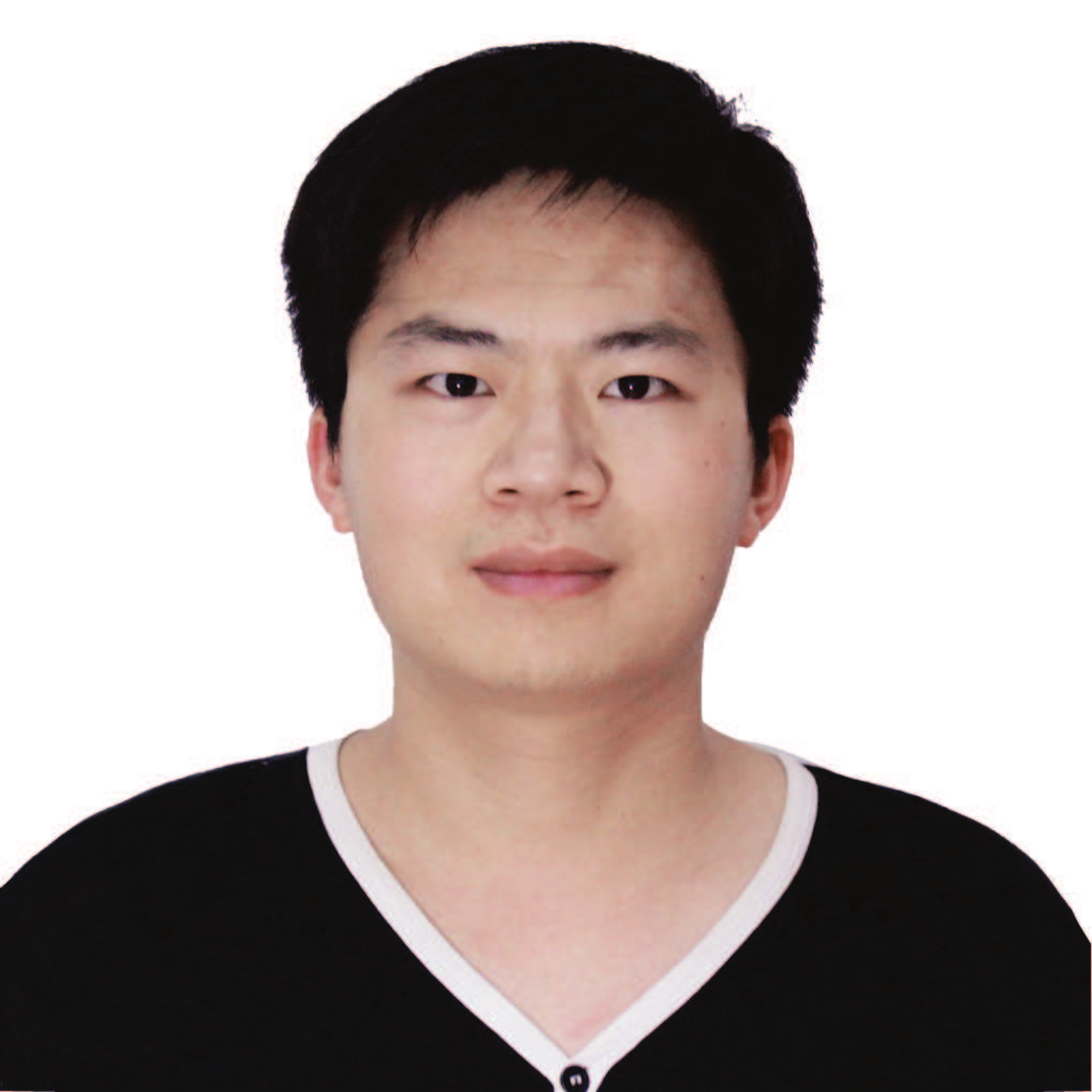}}]{Xiao-Yang~Liu} received his B.Eng. degree in computer science from Huazhong University of Science and Technology, China, in 2010. He is currently a joint PhD in the Department of Electrical Engineering, Columbia University, and in the Department of Computer Science and Engineer, Shanghai Jiao Tong University.

His research interests include tensor theory, deep learning, nonconvex optimization, big data analysis and homomorphic encryption, cyber-security and wireless communication.
\end{IEEEbiography}

\vspace{-0.50in}
\begin{IEEEbiography}[{\includegraphics[width=1in,height=1.21in,clip,keepaspectratio]{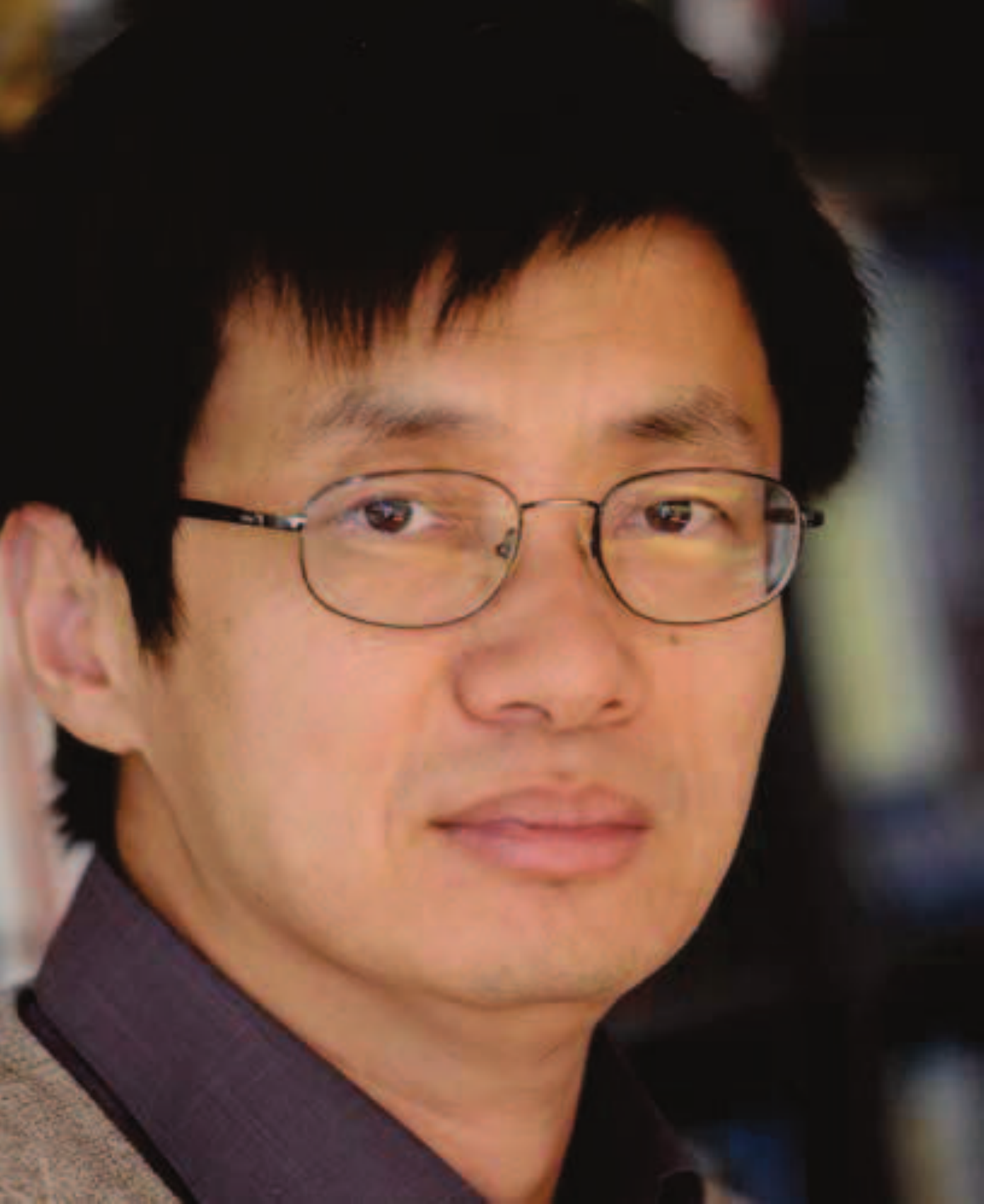}}]{Xiaodong~Wang} (S'98-M'98-SM'04-F'08) received the Ph.D. degree in electrical engineering from Princeton University. He is currently a Professor of electrical engineering with Columbia University, New York NY, USA. His research interests fall in the general areas of computing, signal processing, and communications. He has authored extensively in these areas. He has authored the book entitled Wireless Communication Systems: Advanced Techniques for Signal Reception, (Prentice Hall, 2003). His current research interests include wireless communications, statistical signal processing, and genomic signal processing. He has served as an Associate Editor of the IEEE TRANSACTIONS ON COMMUNICATIONS, the IEEE TRANSACTIONS ON WIRELESS COMMUNICATIONS, the IEEE TRANSACTIONS ON SIGNAL PROCESSING, and the IEEE TRANSACTIONS ON INFORMATION THEORY. He is an ISI Highly Cited Author. He received the 1999 NSF CAREER Award, the 2001 IEEE Communications Society and Information Theory Society Joint Paper Award, and the 2011 IEEE Communication Society Award for Outstanding Paper on New Communication Topics.
\end{IEEEbiography}

\end{document}